\newif\ifdouble
\doubletrue
\ifdouble
\documentclass[journal]{IEEEtran}
\else
\documentclass[12pt, draftclsnofoot, onecolumn]{IEEEtran}
\fi
\newif \iftifs
\tifsfalse
\newif \ifNSM
\NSMtrue
\newif \ifCH
\CHfalse
\usepackage{amsthm}
\usepackage{latexsym}
\usepackage{amssymb}
\usepackage{graphicx}
\usepackage{caption}
\usepackage{subcaption}
\usepackage{amsmath}
\usepackage{color}
\usepackage{cite}
\usepackage{cleveref}
\usepackage[english]{babel}
\usepackage{footmisc}
\usepackage{enumerate}

\newenvironment{itemize*}%
  {\begin{itemize}%
    \setlength{\itemsep}{0pt}%
    \setlength{\parskip}{0pt}}%
  {\end{itemize}}

\theoremstyle{plain}
\newtheorem{theorem}{Theorem}
\newtheorem{lemma}{Lemma}

\newtheorem{claim}{Claim}

\theoremstyle{definition}
\newtheorem{definition}{Definition}
\theoremstyle{remark}
\newtheorem{remark}{Remark}

\newcommand{\off}[1]{}
\newcommand{\tifs}[1]{\textcolor[rgb]{0.00,0.00,0.00}{#1}}

%
\ifCLASSINFOpdf
\else
\fi

\hyphenation{op-tical net-works semi-conduc-tor}
\begin{document}
\ifCH\else
\title{Secure Group Testing
\thanks{A. Cohen, A. Cohen and O. Gurewitz are with the Department of Communication Systems Engineering, Ben-Gurion University of the Negev, Beer-Sheva 84105, Israel (e-mail: alejandr@post.bgu.ac.il; coasaf@bgu.ac.il; gurewitz@bgu.ac.il).
\newline
Parts of this work were presented at the IEEE International Symposium on Information Theory, ISIT 2016.}}
\markboth{}{}
\ifdouble
\author{\IEEEauthorblockN{Alejandro Cohen\hspace{15 mm} Asaf Cohen \hspace{15 mm}  Omer Gurewitz} \vspace{-8mm}}
\else
\author{\IEEEauthorblockN{Alejandro Cohen \hspace{10 mm} Asaf Cohen \hspace{10 mm} Omer Gurewitz} \vspace{-8mm}}
\fi
\maketitle
\begin{abstract}
The principal goal of \emph{Group Testing} (GT) is to identify a small subset of ``defective" items from a large population, by grouping items into as few test pools as possible.
The test outcome of a pool is positive if it contains at least one defective item, and is negative otherwise.
GT algorithms are utilized in numerous applications, and in many of them maintaining the privacy of the tested items, namely, keeping secret whether they are defective or not, is critical.

In this paper, we consider a scenario where there is an eavesdropper (Eve) who is able to observe a subset of the GT outcomes (pools).
We propose a new non-adaptive \emph{Secure Group Testing} (SGT) \tifs{scheme} based on information-theoretic principles. \tifs{The new proposed test design} keeps the eavesdropper ignorant regarding the items' status. Specifically, when the fraction of tests observed by Eve is $0 \leq \delta <1$, we prove that \tifs{with the naive Maximum Likelihood (ML) decoding algorithm} the number of tests required for both correct reconstruction at the legitimate user (with high probability) and negligible information leakage to Eve is $\frac{1}{1-\delta}$ times the number of tests required with no secrecy constraint for the fixed $K$ regime.
By a matching converse, we completely characterize the Secure GT capacity.
Moreover, we consider \tifs{the Definitely Non-Defective (DND)} computationally efficient decoding algorithm, \tifs{proposed in the literature for non-secure GT. We prove that with the new secure test design}, for $\delta < 1/2$, the number of tests required, without any constraint on $K$, is at most $\frac{1}{1/2-\delta}$ times the number of  tests required with no secrecy constraint.
\end{abstract}
      %
\section{Introduction}\label{intro}
The classical version of Group Testing (GT) was suggested during World War II in order to identify syphilis-infected draftees while dramatically reducing the number of required tests \cite{dorfman1943detection}. Specifically, when the number of infected draftees, $K$, is much smaller than the population size, $N$, instead of examining each blood sample individually, one can conduct a small number of of \emph{pooled samples}. Each pool outcome is negative if it contains no infected sample, and positive if it contains at least one infected sample. The problem is thus to identify the infected draftees via as few pooled tests as possible. \Cref{basic_testing} (a)-(c) depicts a small example.

Since its exploitation in WWII, GT has been utilized in numerous fields, including biology and chemistry \cite{du1999combinatorial,macula1999probabilistic}, communications \cite{varanasi1995group,cheraghchi2012graph,wu2015partition,wu2014achievable}, sensor networks \cite{bajwa2007joint}, pattern matching \cite{clifford2007k} and web services \cite{tsai2004testing}. GT has also found applications in the emerging field of Cyber Security, e.g., detection of significant changes in network traffic \cite{cormode2005s}, Denial of Service attacks \cite{xuan2010detecting} and indexing information for data forensics \cite{goodrich2005indexing}. \tifs{Recently, GT was also considered for testing coronavirus (COVID-19) using significatly less tests than the number of tested subjects \cite{narayanan2020accelerated,theagarajan2020group,seong2020group}.}

Many scenarios which utilize GT involve sensitive information which should not be revealed if some of the tests leak (for instance, if one of the several labs to which tests have been distributed for parallel processing is compromised). However, in GT, leakage of even a single pool-test outcome may reveal significant information about the tested items. If the test outcome is negative it indicates that none of the items in the pool is defective; if it is positive, at least one of the items in the pool is defective (see \Cref{basic_testing} (d) for a short example). Accordingly, it is critical to ensure that a leakage of a fraction of the pool-tests outcomes to undesirable or malicious eavesdroppers does not give them any useful information on the status of the items.
It is very important to note that \emph{protecting GT is different from protecting the communication between the parties}.
To protect GT, one should make sure that information about the status of individual items is not revealed if a fraction of the test outcomes leaks. However, in GT, it is unlikely to assume that always one entity has access to all pool-tests, and can apply some encoding function before they are exposed. It is also unlikely to assume a mixer can add a certain substance that will prevent a third party from testing the sample. To protect GT, one should make sure that without altering mixed samples, if a fraction of them leaks, either already tested or not, information is not revealed.

While the current literature includes several works on the privacy in GT algorithms for digital objects \cite{atallah2008private,goodrich2005indexing,freedman2004efficient,rachlin2008secrecy}, these works are based on cryptographic schemes, assume the testing matrix is not known to all parties, impose a high computational burden, and, last but not least, assume the computational power of the eavesdropper is limited \cite{tagkey2004391,C13}. \emph{Information theoretic security} considered for secure communication \cite{C2,C13}, on the other hand, if applied appropriately to GT, can offer privacy at the price of \emph{additional tests}, without keys, obfuscation or assumptions on limited power. Due to the analogy between channel coding and group-testing regardless of security constraints, \cite{atia2012boolean,baldassini2013capacity}, in the supplementary materials of this work, we present an extensive survey of the literature on secure communication as well.
\subsection*{Main Contribution}
In this work, we formally define Secure Group Testing (SGT), suggest SGT algorithms based on information-theoretic principles and analyse their performance. In the considered model, there is an eavesdropper Eve who might observe part of the vector of pool-tests outcomes. The goal of the test designer is to design the tests in a manner such that a legitimate decoder can decode the status of the items (whether the items are defective or not) with an arbitrarily small error probability. It should {\it also} be the case that as long as Eve the eavesdropper gains only part of the output vector (a fraction $\delta$ - a bound on the value of $\delta$ is known {\it a priori} to the test designer, but which specific items are observed is not), Eve cannot (asymptotically, as the number of items being tested grows without bound) gain any significant information on the status of any of the items.

\ifdouble
\begin{figure}
\centering
\begin{subfigure}[b]{0.24\textwidth}
\includegraphics[scale=0.7]{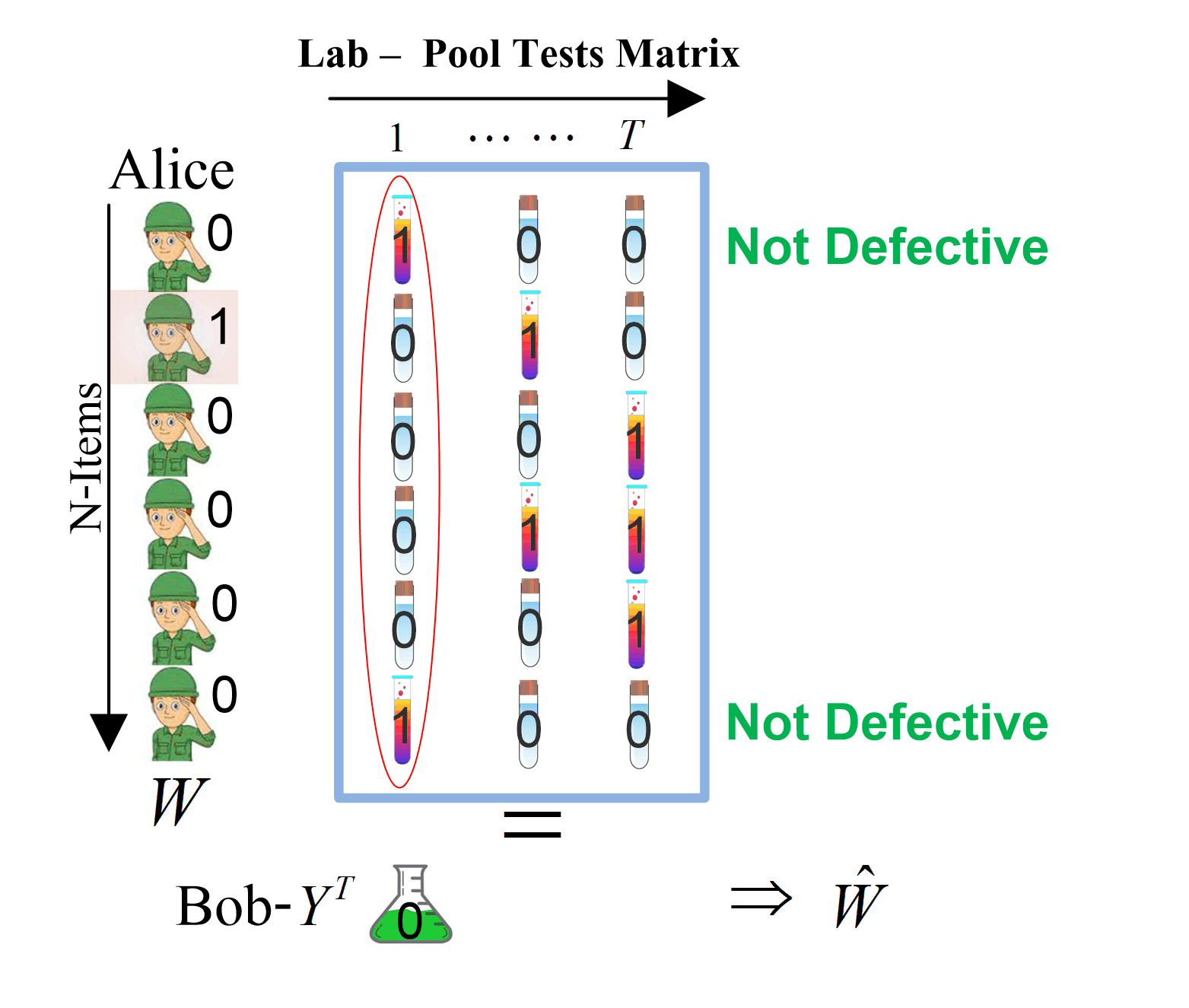}
\caption{}
\end{subfigure}
\begin{subfigure}[b]{0.24\textwidth}
\includegraphics[scale=0.7]{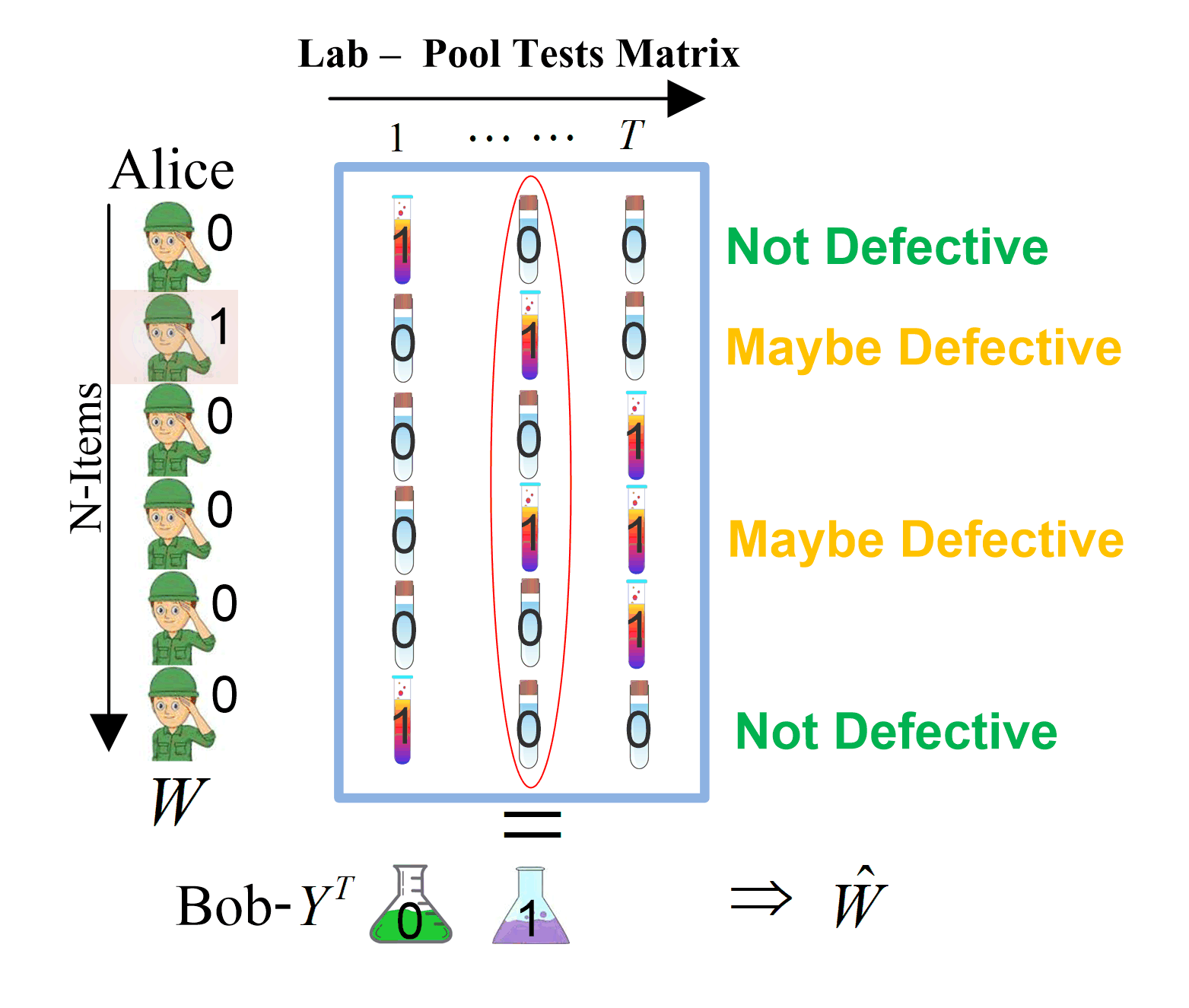}
\caption{}
\end{subfigure}
\begin{subfigure}[b]{0.24\textwidth}
\includegraphics[scale=0.7]{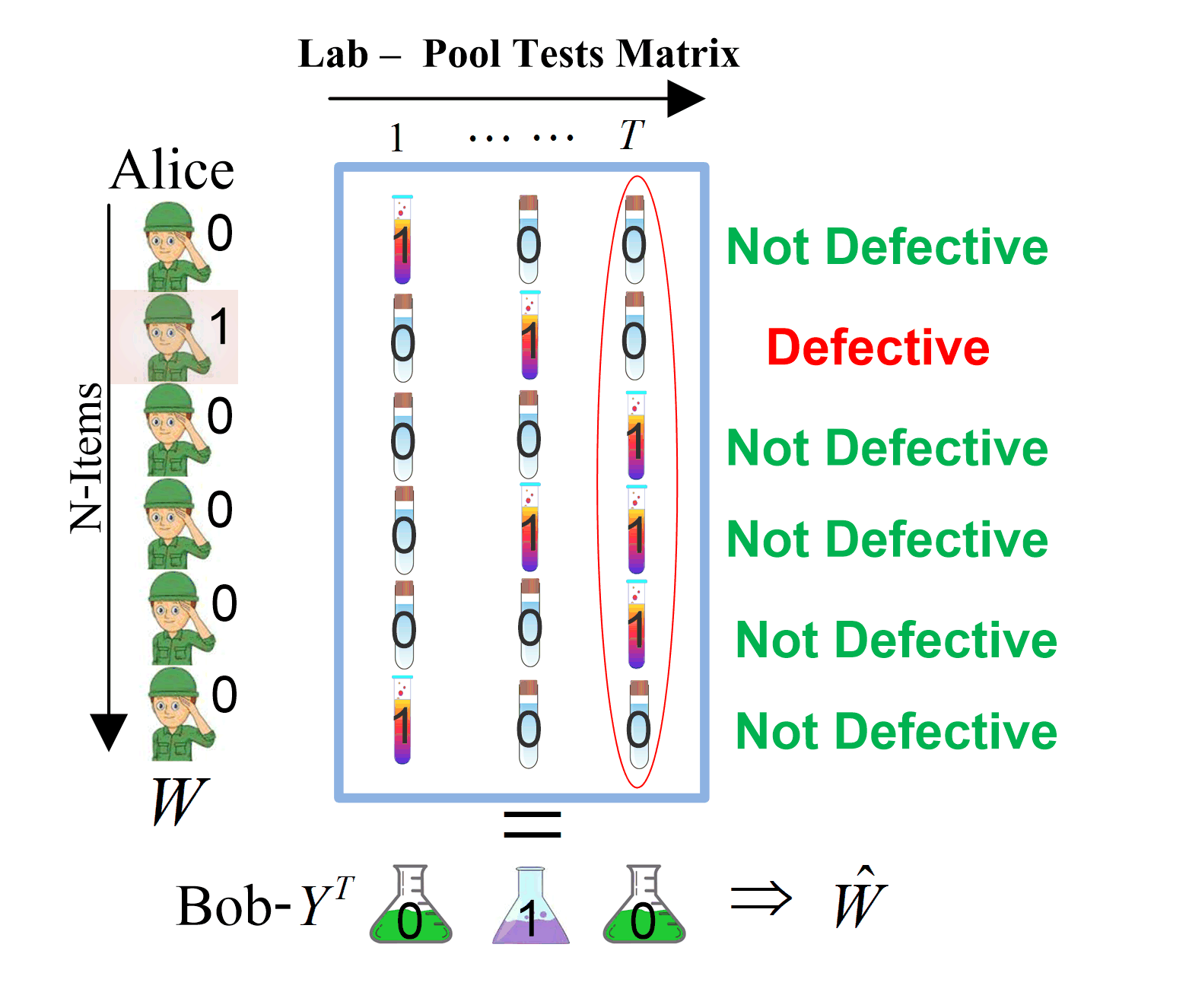}
\caption{}
\end{subfigure}
\begin{subfigure}[b]{0.24\textwidth}
\includegraphics[scale=0.6]{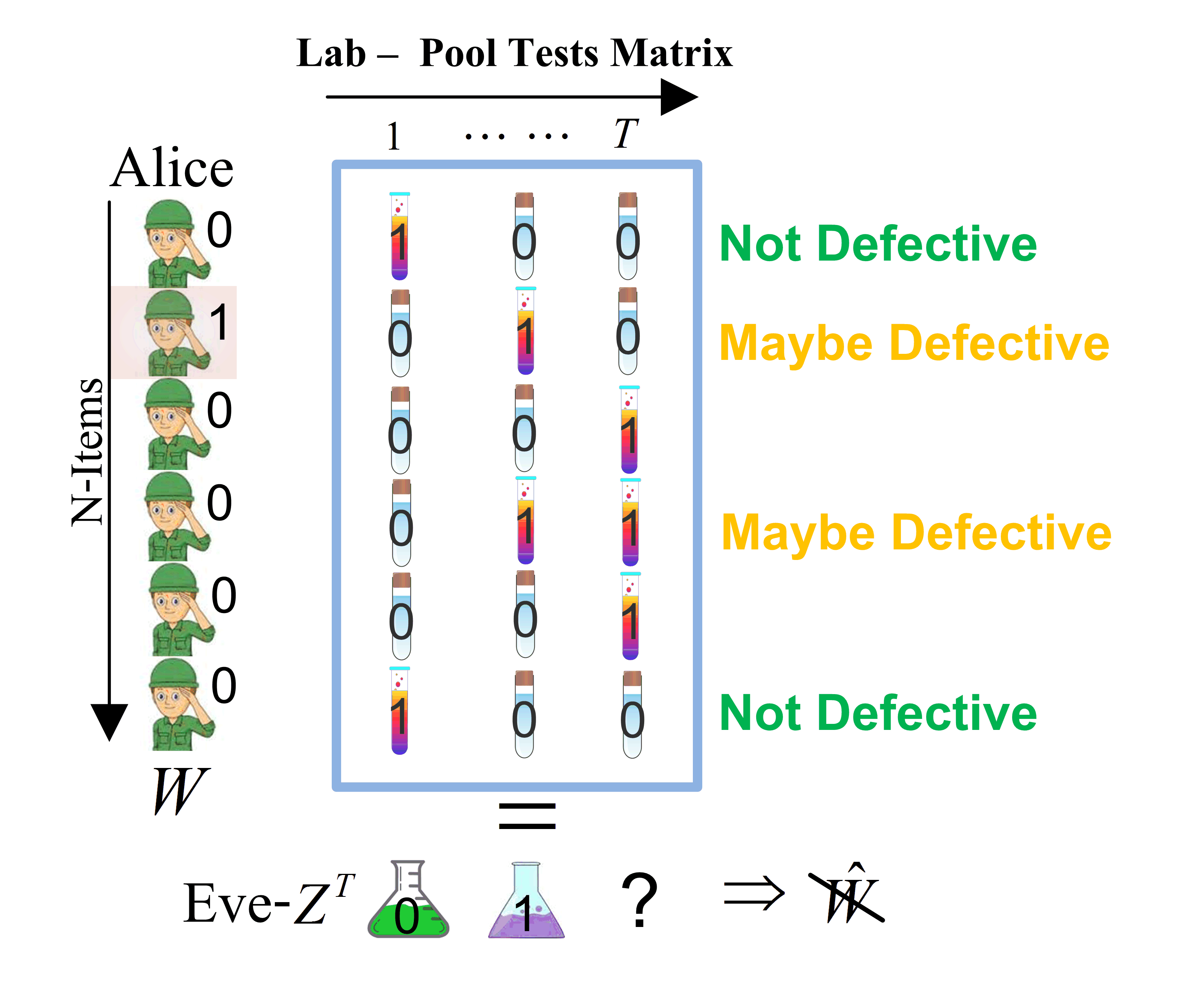}
\caption{}
\end{subfigure}
\caption[]{Classical group testing: \textit{An example of test results, a simple decoding procedure at the legitimate decoder and the risk of leakage. The example includes 7 items, out of which at most one defective (the second one in this case; unknown to the decoder). Three pooled tests are conducted. Each row dictates in which pooled tests the corresponding item participates. (a) Since the first result is negative, items 1 and 6 are not defective. (b) The second result is positive, hence at least one of items 2 and 4 is defective. (c) Based on the last result, as item 4 cannot be defective, it is clear that 2 is defective. Note that decoding in this case is simple: any algorithm which will simply rule out each item whose row in the matrix is not compatible with the result will rule out all but the second item, due to the first and last test results being negative, thus identifying the defective item easily. (d) An eavesdropper who has access to part of the results (the first two) can still infer useful information. Our goal is construct a testing matrix such that such an eavesdropper remains ignorant.}}
\label{basic_testing}
\vspace{-0.4cm}
\end{figure}
\else
\begin{figure}
\begin{flushleft}
\begin{subfigure}[b]{0.232\textwidth}
\includegraphics[scale=0.6]{LabPoolTesting2.png}
\caption{}
\end{subfigure}
\begin{subfigure}[b]{0.232\textwidth}
\includegraphics[scale=0.6]{LabPoolTesting3.png}
\caption{}
\end{subfigure}
\begin{subfigure}[b]{0.232\textwidth}
\includegraphics[scale=0.6]{LabPoolTesting4.png}
\caption{}
\end{subfigure}
\begin{subfigure}[b]{0.232\textwidth}
\includegraphics[scale=0.6]{SLabPoolTesting21.png}
\caption{}
\end{subfigure}
\end{flushleft}
\caption[]{Classical group testing: \textit{An example of test results, a simple decoding procedure at the legitimate decoder and the risk of leakage. The example includes 7 items, out of which at most one defective (the second one in this case; unknown to the decoder). Three pooled tests are conducted. Each row dictates in which pooled tests the corresponding item participates. (a) Since the first result is negative, items 1 and 6 are not defective. (b) The second result is positive, hence at least one of items 2 and 4 is defective. (c) Based on the last result, as item 4 cannot be defective, it is clear that 2 is defective. Note that decoding in this case is simple: any algorithm which will simply rule out each item whose row in the matrix is not compatible with the result will rule out all but the second item, due to the first and last test results being negative, thus identifying the defective item easily. (d) An eavesdropper who has access to part of the results (the first two) can still infer useful information. Our goal is construct a testing matrix such that such an eavesdropper remains ignorant.}}
\label{basic_testing}
\end{figure}
\fi

We propose a SGT code and corresponding decoding algorithms which ensure \tifs{high reliability} (with high probability over the test design, the legitimate decoder should be able to estimate the status of each item correctly), as well as \emph{\tifs{weak and} strong secrecy} conditions (as formally defined in \Cref{formulation}) - which ensures that essentially no information\footref{note2} about the status of individual items’ leaks to Eve.

Our first SGT code and corresponding decoding algorithm (based on Maximum Likelihood (ML) decoding) requires a number of tests that is essentially information-theoretically optimal in N, K and  $\delta$\tifs{, as demonstrated in \Cref{converse} by corresponding information-theoretic converse that we also show for the problem. The converse result shows that even guaranteeing weak security requires at least a certain number of tests, and the direct results show that essentially the same number of tests suffices to guarantee strong security. Hence both converse and direct results are with regard to the corresponding “harder to prove” notion of security.} 

The second code and corresponding decoding algorithm, while requiring a constant factor larger number of tests than is information-theoretically necessary (by a factor that is a function of $\delta$), is computationally efficient. It maintains the reliability and secrecy guarantees, yet requires only O($N^2T$) decoding time, where $T$ is the number of tests.

We do so by proposing a model, which is, in a sense, analogous to a {\it wiretap channel model}, as depicted in \Cref{figure:group_testing_model}.
In this analogy the subset of defective items (unknown {\it a priori} to all parties) takes the place of a confidential message. The testing matrix (representing the design of the pools - each row corresponds to the tests participated in by an item, and each column corresponds to a potential test) is a succinct representation of the encoder's codebook. \tifs{Rows of} this testing matrix can be considered as codewords. The decoding algorithm is analogous to a channel decoding process, and the eavesdropped signal is the output of \textit{an erasure channel}, namely, having only any part of the transmitted signal from the legitimate source to the legitimate receiver.

In classical non-adaptive group-testing, each row of the testing matrix comprises of a length-$T$ binary vector which determines which pool-tests the item is tested in.
In the SGT code constructions proposed in this work, each item instead corresponds to a vector chosen uniformly at random from a pre-specified set of random and independent vectors.
Namely, we use \emph{stochastic encoding}, and each vector corresponds to different sets of pool-tests an item may participate in. For each item the lab picks one of the vectors in its set (we term the set associated with item $j$ as ``Bin j") uniformly at random, and the item participates in the pool-tests according to this randomly chosen vector.
The set (``Bin") is known {\it a priori} to all parties, but the specific vector chosen by the encoder/mixer is only known to the encoder/mixer, and hence is {\it not a shared key/common randomness} in any sense. A schematic description of our procedure is depicted in \Cref{fig:WiretapCoding}.
\ifdouble
\begin{figure}
  \centering
  \includegraphics[trim=9cm 9.2cm 1cm 9.05cm,clip,scale=0.85]{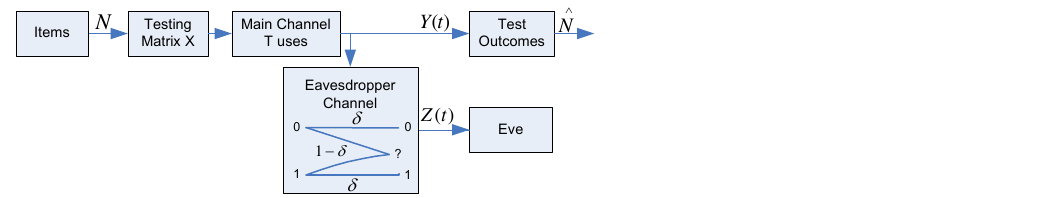}
  \caption{An analogy between a wiretap erasure channel and the corresponding SGT model.}
  \label{figure:group_testing_model}
  \vspace{-0.4cm}
\end{figure}
\else
\begin{figure}
  \centering
  \includegraphics[trim=6cm 9.2cm 0cm 9.05cm,clip,scale=1]{SecureGroupTestModelE.pdf}
  \caption{An analogy between a wiretap erasure channel and the corresponding SGT model.}
  \label{figure:group_testing_model}
\end{figure}
\fi

Accordingly, by obtaining a pool-test result, without knowing the specific vectors chosen by the lab for each item, the eavesdropper may gain only negligible information\footnote{\label{note2}This notion will be made precise for both weak and strong security constraints in \Cref{def_r_w_s}, with proofs in \tifs{\Cref{LowerBoundLeakage}} and Appendix \ref{strong_secrecy}, respectively.} regarding the items themselves. Specifically, we show that by careful design of the testing procedure, even though the pool-tests in which each item participated are chosen randomly and even though the legitimate user does not know \emph{a-priori} in which pool-tests each item has participated, the legitimate user will, with high probability over the testing procedure, be able to correctly identify the set of defective items, while the eavesdropper, observing only a subset of the pool-test results, will have no significant information regarding the status of the items.

To simplify the technical aspects and focus on the key methods, in this paper we consider an erasure channel at the eavesdropper, with erasure probability of $1-\delta$. Yet, in \Cref{noisy_eve}, we show that it is possible to generalize the main results given herein to the case where the outcome signal at the eavesdropper is affected by other noise models, e.g., false positive errors, false negative errors, both possible errors together or even a Binary Symmetric Channel (BSC).

Finally, we propose a few applications where the suggested secure GT coding scheme can be useful. Specifically,  we consider blood testing, anomaly detection in network data streams, and data aggregation and neighbor discovery in Wireless Sensor Networks (WSN).

The structure of this work is as follows. In \Cref{formulation}, a SGT model is formally described. \Cref{main results} includes our main results, with the direct proved in \Cref{LowerBound} and converse proved in \Cref{converse}. \Cref{noisy_eve} includes the case where different models of noisy observations are obtained at the eavesdropper. \Cref{efficient_algorithms} describes a computationally efficient algorithm, and proves an upper bound on its error probability. Strong security is proved in \Cref{strong_secrecy}.  In the supplementary materials of this work, we show a few examples, for which the SGT coding scheme is applicable, and an extensive survey.  %
\section{Problem Formulation}\label{formulation}
In \emph{SGT}, a legitimate user desires to identify a small unknown subset $\mathcal{K}$ of defective items from a larger set $\mathcal{N}$, while minimizing the number of measurements $T$ and keeping the eavesdropper, which is able to observe a subset of the tests results, ignorant regarding the status of the $\mathcal{N}$ items.
Let $N=|\mathcal{N}|$, $K=|\mathcal{K}|$ denote the total number of items, and the number of defective items, respectively.
As formally defined below, the legitimate user should (with high probability) be able to correctly estimate the set $\mathcal{K}$; on the other hand, from the eavesdropper’s perspective, this set should be ``almost" uniformly distributed over all possible ${N}\choose{K}$ sets. We assume that the number $K$ of defective items in $\mathcal{K}$ is known {\it a priori} to all parties - this is a common assumption in the GT literature \cite{macula1999probabilistic}.\footnote{If this is not the case, \cite{damaschke2010competitive,damaschke2010bounds} give methods/bounds on how to ``probably approximately" correctly learn the value of $K$ in a single stage with $O(\log N)$ tests.}

Throughout the paper, we use boldface to denote matrices, capital letters to denote random variables, lower case letters to denote their realizations, and calligraphic letters to denote the alphabet. Logarithms are in base $2$ and \tifs{$H(X)$ denotes the entropy of $X$. For a binary random variable with distribution $(p,1-p)$, with a slight abuse of notation, $H(p)$ denotes its entropy, namely, $-p\log p -(1-p)\log (1-p)$.}

\Cref{figure:secure-group-testing} gives a graphical representation of the model.
In general, and regardless of security constraints, non-adaptive GT is defined by a testing matrix
\begin{equation*}
\textbf{X}=[X_{1}^{T};X_{2}^{T};\ldots; X_{N}^{T}] \in \{0,1\}^{N\times T},
\end{equation*}
where each row corresponds to a separate item $j\in\{1,\ldots,N\}$, and each column corresponds to a separate pool test $t\in\{1,\ldots,T\}$.
For the $j$-th item,
\begin{equation*}
X_{j}^{T}=\{X_{j}(1),\ldots, X_{j}(T)\}
\end{equation*}
is a binary row vector, with the $t$-th entry $X_{j}(t)=1$ if and only if item $j$ participates in the $t$-th test.
\ifdouble
\begin{figure*}
  \centering
  \includegraphics[trim=0cm 0cm 0cm 0cm,clip,scale=0.8]{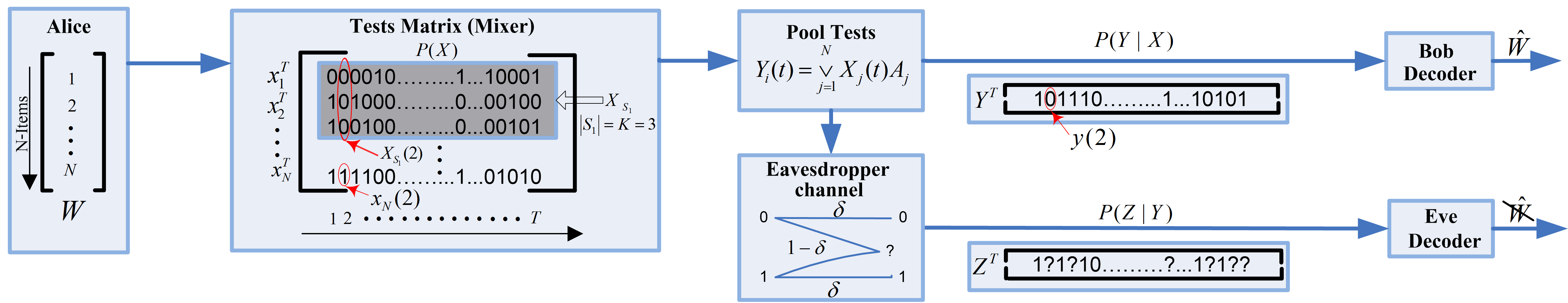}
  \caption{Noiseless non-adaptive secure group-testing setup.}
  \label{figure:secure-group-testing}
  \vspace{-0.4cm}
\end{figure*}
\else
\begin{figure*}
  \centering
  \includegraphics[trim=0cm 0cm 0cm 0cm,clip,scale=0.78]{secure-group-testing4.png}
  \caption{Noiseless non-adaptive secure group-testing setup.}
  \label{figure:secure-group-testing}
\end{figure*}
\fi
If $A_j\in \{0,1\}$ denotes an indicator function for the $j$-th item, determining whether it belongs to the defective set, i.e., $A_j=1$ if $j \in \mathcal{K}$ and $A_j = 0$ otherwise, the (binary) outcome of the $t \in \{1,\ldots,T\}$ pool test $Y(t)$ equals
\begin{equation*}
Y(t)=\bigvee_{j=1}^{N}X_{j}(t)A_j=\bigvee_{d\in \mathcal{K}}X_{d}(t),
\end{equation*}
where $\bigvee$ is used to denote the boolean OR operation.

In SGT, we assume an eavesdropper who observes a noisy vector $Z^{T}=\{Z(1),\ldots, Z(T)\}$, generated from the outcome vector $Y^{T}$. In the erasure case considered during this work, the probability of erasure is $1-\delta$, i.i.d.\ for each test.\footnote{In \Cref{noisy_eve}, we consider additional cases where Eve observes a general noisy vector $Z^T$, which is generated from $Y^T$ using different noise models, not necessarily a BEC.}
That is, on average, $T\delta$ outcomes are not erased and are accessible to the eavesdropper via $Z^{T}$. Therefore, in the erasure case, if $B_t\in \{1,?\}$ is an erasure indicator function for the $t$-th pool test, i.e., $B_t=1$ with probability $\delta$, and $B_t=?$ with probability $1-\delta$, the eavesdropper observes
\begin{equation*}
Z(t)=Y(t)B_t= \left(\bigvee_{j=1}^{N}X_{j}(t)A_j\right)B_t,\text{ }\text{ } t=1,\ldots,T.
\end{equation*}

Denote by $W \in \mathcal{W} \triangleq \{1,\ldots, {N \choose K}\}$ the \tifs{random} index of the \emph{subset of defective items}\tifs{, and by $w$ a specific realization of the subset}. We assume $W$ is uniformly distributed, that is, there is no \emph{a priori} bias towards any specific subset.\footnote{\tifs{This is a common model considered in the GT literature, in which we assume $K$ defectives items. $K$ is fixed and known. The choice of which $K$ out of $N$ is random and uniform, and unknown to all the parties. Another model considered in the GT literature assumes that each item is defective with some probability, i.i.d. across items.} \tifs{In many group-testing scenarios one can relate results for the second model presented above, to other scenarios (e.g., fixed $K$), hence we focus on the first model, where {\it exactly} $K$ items are defective.}}
Further, denote by $\hat{W}(Y^T)$ the index recovered by the legitimate decoder, after observing $Y^T$. In this work, we assume that the mixer may use a \emph{randomized} testing matrix. In this case, the random bits used \tifs{are known} only to the mixer, and are not assumed to be shared with the decoder. In other words, the ``codebook" which consists of all possible testing matrices is known to all parties, Alice, Bob and Eve. However, if the mixer \tifs{choosing particular} $\textbf{X}$, the random value is not shared with Bob or Eve. We refer to the codebook consisting of all possible matrices, together with the decoder at Bob's side as SGT algorithm.

As we are interested in the asymptotic behavior, i.e., in ``capacity style" results, with a focus on the number of tests $T$ (as a function of $N$ and $K$) required to guarantee high probability of recovery as the number of items $N$ grows without bound. For simplicity, in the first part of this work, we focus primarily on the regime where $K$ is a constant independent of $N$. In \Cref{efficient_algorithms}, we give an algorithm which applies to any $K$.\footnote{Following the lead of~\cite{aldridge2017capacity}, in principle, many of our results in this section as well can be extended to the regime where $K = o(N^{1/3})$, but for ease of presentation we do not do so here).} The following definition lays out the goals of SGT algorithms.
\begin{definition}\label{def_r_w_s}
A sequence of SGT algorithms with parameters $N,K$ and $T$ is asymptotically (in N) \emph{reliable} and \emph{weakly} or \emph{strongly} secure if,

\noindent (1) \emph{Reliable}: The probability (over the index $W$) of incorrect reconstruction of $W$ at the legitimate receiver converges to zero. That is,
\begin{equation*}
\lim_{N \to \infty} P(\hat{W}(Y^T) \ne W) = 0.
\end{equation*}

\noindent (2) \emph{Weakly secure}: One potential security goal is so-called {\it weak information-theoretic security} against eavesdropping. Specifically, if the eavesdropper observes $Z^T$, a scheme is said to be weakly secure if
\begin{equation*}
\lim_{T \to \infty} \frac{1}{T}I(W;Z^T) = 0.
\end{equation*}
(3) \emph{Strongly secure}: A stronger notion of security is so-called {\it strong information-theoretic security} against eavesdropping. Specifically, if the eavesdropper observes $Z^T$, a scheme is said to be strongly secure if
\begin{equation*}
\lim_{T \to \infty}I(W;Z^T) = 0.
\end{equation*}
\end{definition}
\begin{remark}
  Note that strong security implies that in the limit the distribution over $Z^T$ is essentially statistically independent\footref{note2} of the distribution over $W$. Specifically, the {\it KL divergence} between $p_{Z^T,W}$ and $p_{Z^T}p_{W}$ converges to $0$.
\end{remark}
\begin{remark}\label{whyW}
  While weak security is a much weaker notation of security against eavesdropping than strong security, and indeed is implied by strong security, nonetheless we consider it in this work for the following reason. Our impossibility result will show that even guaranteeing weak security requires at least a certain number of tests, and our achievability results will show that essentially the same number of tests suffices to guarantee strong security. Hence both our impossibility and achievability results are with regard to the corresponding ``harder to prove" notion of security.
\end{remark}
To conclude, the goal in this work is to design (for parameters $N$ and $K$) an $N \times T$ measurement matrix (which is possibly randomized) and a decoding algorithm \tifs{$\hat{W}(Y^T)$}, such that on observing $Y^T$, the legitimate decoder can (with high probability over $W$) identify the subset of defective items, and yet, on observing $Z^T$, the eavesdropper learns essentially nothing\footref{note2} about the set of defective items.  %
\section{Main Results}\label{main results}
Under the model definition given in \Cref{formulation}, our main results are the following sufficiency (direct) and necessity (converse) conditions, characterizing the maximal number of tests required to guarantee both reliability and security. The proofs are deferred to \Cref{LowerBound}, \Cref{converse} and \Cref{strong_secrecy}.
\subsection{Direct (Sufficiency)}
The sufficiency part is given by the following theorem.
\begin{theorem}\label{direct theorem1}
Assume a SGT model with $N$ items, out of which $K=O(1)$ are defective. For any $0 \leq \delta < 1$, if
\begin{equation}\label{main_result_eq}
 T \geq \frac{1 + \varepsilon}{1 - \delta} K \log N,
\end{equation}
for some $\varepsilon > 0$ independent of $N$ and $K$, then there exists a sequence of SGT algorithms which are reliable and secure. That is, as $N\rightarrow \infty$, both the average error probability approaches zero exponentially and an eavesdropper with leakage probability $\delta$ is kept ignorant\tifs{, such that $\frac{1}{T}I(W;Z^T) \leq \epsilon_T + \epsilon^{\prime}$ where $\epsilon_T \to 0$ as $T \to \infty$.}
\end{theorem}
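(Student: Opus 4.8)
The plan is to exploit the wiretap-channel analogy set up in the introduction: the index $W$ of the defective set plays the role of the confidential message, Bob's noiseless observation of $Y^T$ is the legitimate channel, and Eve's erased observation $Z^T$ is the degraded (wiretapper's) channel. Accordingly I would use \emph{stochastic encoding via random binning}. For each item $j\in\{1,\ldots,N\}$ generate a bin $\cC_j$ consisting of $2^{T R_r}$ independent length-$T$ rows, each entry drawn i.i.d.\ Bernoulli($p$) with $p=\Theta(1/K)$ chosen so that each pool outcome is approximately balanced. The mixer draws one row per item uniformly and independently from its bin, and only the $K$ rows of the defective items contribute to $Y(t)=\bigvee_{d\in\cK}X_d(t)$. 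Thus the \emph{marginal} law of the effective testing matrix is exactly i.i.d.\ Bernoulli, while the per-item randomization rate $R_r$ is a free design parameter I will tune to trade reliability against secrecy.

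For reliability I would analyze an ML (consistency) decoder and bound the probability that some incorrect defective set $\cK'$, differing from the true set in $i$ items, is consistent with $Y^T$. The crucial point is that, because of binning, $\cK'$ may ``explain'' $Y^T$ using \emph{any} of the $2^{iTR_r}$ row-combinations available in the bins of its $i$ wrong items. A union bound over the $\binom{N-K}{i}$ such sets and over these combinations gives, schematically,
\begin{equation*}
P_{e,i}\;\lesssim\;\binom{N-K}{i}\,2^{iTR_r}\,2^{-T I_i},
\end{equation*}
where $I_i$ is the single-combination exponent for the disjunctive GT channel. This bound vanishes provided $T\,(I_i-iR_r)>\log\binom{N-K}{i}$, and the standard computation for the balanced Bernoulli design identifies $I_i$ with the quantity that reduces, in the non-secure case $R_r=0$, to the familiar threshold $\tfrac{K}{i}\log\binom{N-K}{i}$.

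For secrecy I would establish \emph{strong} secrecy, $I(W;Z^T)\to 0$, through a channel-resolvability (soft-covering) argument: if the total randomization carried by the $K$ active bins is rich enough, then the distribution of Eve's erased observation $Z^T$ induced by any fixed message $W$ is exponentially close (in variational/KL sense) to a single message-independent distribution, forcing $Z^T$ to be asymptotically independent of $W$. Since the erasure channel passes only a $\delta$ fraction of outcomes, Eve's information rate about the active rows satisfies $I(X;Z)=\delta\,I(X;Y)$, so the covering requirement reads $iR_r\gtrsim\delta I_i$. Imposing this at equality---the least randomization that still secures the design---and substituting into the reliability condition eliminates $R_r$ and leaves $T>\tfrac{1}{1-\delta}\cdot\tfrac{\log\binom{N-K}{i}}{I_i}$ for every $i$; optimizing $p$ and taking the worst-case index $i$ then produces exactly the claimed bound, with the $(1+\varepsilon)$ slack guaranteeing exponential decay of the error probability.

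The main obstacle I anticipate is the secrecy step, and specifically upgrading weak to \emph{strong} secrecy for the disjunctive channel under \emph{random} erasures. Standard soft-covering lemmas are stated for point-to-point channels with a single codeword, whereas here $Y^T$ is the OR of $K$ independently binned codewords; I must therefore argue that the \emph{collective} randomness of the $K$ active bins---not any single bin---suffices to wash out the message, and I must average over the random erasure pattern $\{B_t\}$ so that the covering bound holds for the typical number $\approx\delta T$ of revealed outcomes. Controlling the resulting mutual information uniformly, while keeping $R_r$ small enough not to spoil the reliability union bound, is the delicate balance at the heart of the argument.
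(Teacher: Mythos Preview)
Your high-level architecture---random i.i.d.\ Bernoulli$(\Theta(1/K))$ binning with $2^{TR_r}$ rows per item, ML decoding for Bob, and choosing $R_r\approx\delta/K$ so that the randomness exactly saturates Eve's erased view---is precisely the paper's construction, and your reliability sketch (union bound over the $\binom{N-K}{i}2^{iTR_r}$ competing explanations, yielding the threshold $T(I_i-iR_r)>\log\binom{N-K}{i}$ with $I_i=i/K$) is the content of the paper's Lemma~1/Claim~1 route via a Gallager-type exponent.

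Where you diverge is the secrecy step. The paper's \emph{primary} proof does \emph{not} use channel resolvability. Instead it runs the classical Wyner chain
\[
\tfrac{1}{T}I(W;Z^T)=\tfrac{1}{T}I(X_{\mathcal S}^T;Z^T)-\tfrac{1}{T}H(R_K)+\tfrac{1}{T}H(R_K\mid Z^T,W),
\]
and kills the last term by arguing that, \emph{given} $W$, Eve faces a Boolean $K$-user MAC followed by a BEC$(1-\delta)$ whose sum-capacity is $\delta$; with $\log M\approx T\delta/K$ she can therefore \emph{decode} the private randomness $R_K$, forcing $H(R_K\mid Z^T,W)\to 0$. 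This yields weak secrecy with $M=2^{T(\delta-\epsilon')/K}$. The paper then upgrades to strong secrecy separately (Appendix~B) by an ad~hoc sphere/shell-counting concentration argument with $M=2^{T(\delta+\epsilon')/K}$.

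Your soft-covering route is a legitimate alternative and, if carried out, would give strong secrecy in one stroke; it is the modern counterpart of the paper's Appendix~B argument. Two cautions, though. First, your phrasing ``the covering requirement reads $iR_r\gtrsim\delta I_i$'' is misleading: resolvability is a single constraint on the \emph{total} injected randomness, $K R_r>I(X_{\mathcal S};Z)\approx\delta$, not a family indexed by $i$ (it happens to collapse because $I_i/i$ is constant here, but the logic should be stated correctly). Second, the obstacle you flag is real and is exactly where the paper's Appendix~B gets informal: standard soft-covering lemmas treat a single random codebook, whereas here the ``codeword'' is a Boolean OR of $K$ independently binned rows, so you would need either a product-of-bins resolvability statement or to view the $M^K$ joint combinations as a single super-codebook of rate $K\log M/T$ and verify the covering condition against the OR$+$BEC channel.
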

The construction of the SGT algorithm, together with the proofs of reliability and weak secrecy are deferred to Section \ref{LowerBound}. The strong secrecy proof\tifs{, under which $\lim_{T \to \infty}I(W;Z^T) = 0$,} is deferred to \Cref{strong_secrecy}.
In fact, in \Cref{LowerBound} we actually prove that the error probability decays to $0$. However, a few important remarks are in order now.

First, it is important to note that compared to only a reliability constraint, the number of tests required for \emph{both reliability and secrecy} is increased by the multiplicative factor $\frac{1}{1-\delta}$, where, again, $\delta$ is the leakage probability at the eavesdropper. Together with the converse below, this suggests
\[
T= \Theta \left(\frac{K\log N}{1-\delta}\right),
\]
and, a $\Theta \left(K\log N\right)$ result for $\delta$ bounded away from $1$.

The result given in \Cref{direct theorem1} uses an ML decoding at the legitimate receiver. The complexity burden in ML, however, prohibits the use of this result for large N. In \Cref{direct theorem5}, we suggest an efficient decoding algorithm, which maintains the reliability and the secrecy results using a much simpler decoding rule, at the price of only slightly more tests.

The results thus far were for $K=O(1)$. Clearly, it is interesting to analyse the regime where $K$ is allowed to grow with $N$, that is $K=O(\log(N))$. This regime is indeed analyzed in \Cref{Kgrow}. The asymptotic result is given by the Theorem below.
\begin{theorem}\label{theorem_K=o(N)}
Assume an SGT model with $N$ items, out of which $K=O(\log(N))$ are defective. For any $0 \leq \delta < 1$, reliability and secrecy, can be maintained if
\begin{equation*}
T =  \Theta \Bigg(\frac{K\log N}{1-\delta}\Bigg),
\end{equation*}
\tifs{that is $\frac{1}{T}I(W;Z^T) \leq \epsilon_T + \epsilon^{\prime}$ where $\epsilon_T \to 0$ as $T \to \infty$.}
\end{theorem}
\begin{remark}\label{OP}
  While order-optimal result are obtained for the regime of $K=O(\log(N))$, following the analysis given in \Cref{Kgrow}, in the regime of $K=\omega(\log(N))$, reliability and secrecy in the SGT model can still be achieved for any $0 \leq \delta < 1$, yet the bound is no longer tight. For example, when $K = N^{1/4}$, the reliability and secrecy constraints are satisfied for $T=O\left(\frac{\sqrt N\log(N)}{1-\delta}\right)$.
\end{remark}
\subsection{Converse (Necessity)}
The necessity part is given by the following theorem.
\begin{theorem}\label{converse theorem}
Let $T$ be the minimum number of tests necessary to identify a defective set of \tifs{any} cardinality $K$ among population of size $N$ while keeping an eavesdropper, with a leakage probability $\delta < 1$, ignorant regarding the status of the items. Then, if $\frac{1}{T} I(W;Z^T)<\epsilon$, one must have:
\begin{eqnarray*}
T \ge \frac{1-\epsilon_T}{1-\delta}\log\binom{N}{K},
\end{eqnarray*}
where $\epsilon_T = \epsilon +\tilde{\epsilon}_T$, with $\tilde{\epsilon}_T\rightarrow 0$ as $T\rightarrow \infty$.
\end{theorem}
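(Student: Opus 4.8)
The plan is to run a standard wiretap-type converse, combining Fano's inequality (from the reliability requirement) with a single mutual-information inequality that exploits the fact that the eavesdropper's view $Z^T$ is a \emph{stochastically degraded}, erased version of the true outcome vector $Y^T$. Write $L \triangleq \log\binom{N}{K}$; since $W$ is uniform, $H(W)=L$. Because $Z^T$ is obtained from $Y^T$ by the i.i.d.\ erasure channel whose erasure indicators $B^T$ are independent of $(W,\mathbf{X})$, the chain $W - Y^T - Z^T$ is Markov, so $I(W;Z^T\mid Y^T)=0$ and hence $I(W;Y^T)=I(W;Z^T)+I(W;Y^T\mid Z^T)$. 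This last identity is the backbone of the argument.

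First I would invoke reliability. Since the legitimate estimate $\hat W$ is a function of $Y^T$ and $P_e \triangleq P(\hat W(Y^T)\neq W)\to 0$, Fano's inequality gives $H(W\mid Y^T)\le 1+P_e L =: L\,\tilde\epsilon_T$, where $\tilde\epsilon_T = (1+P_e L)/L \to 0$ as $N\to\infty$ (so $L\to\infty$, and $P_e\to 0$ by reliability). Expanding $H(W)$ through the Markov decomposition yields the master identity
\begin{equation*}
L = H(W) = I(W;Z^T) + I(W;Y^T\mid Z^T) + H(W\mid Y^T).
\end{equation*}

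The heart of the argument is to upper bound the middle term. I would use $I(W;Y^T\mid Z^T)\le H(Y^T\mid Z^T)$ and then note that, conditioned on the erasure pattern $B^T$ (which is itself determined by $Z^T$), the only coordinates of $Y^T$ that remain uncertain given $Z^T$ are the erased ones; since each $Y(t)$ is binary,
\begin{equation*}
H(Y^T\mid Z^T) \le \mathbb{E}\big[\#\{t: B_t = \text{?}\}\big] = (1-\delta)T .
\end{equation*}
This is precisely the step at which the erasure rate enters and produces the $1/(1-\delta)$ factor. Feeding in the security hypothesis $I(W;Z^T)<\epsilon T$ and the Fano bound gives $L \le (1-\delta)T + \epsilon T + L\,\tilde\epsilon_T$, i.e. $(1-\tilde\epsilon_T)L \le (1-\delta+\epsilon)T$, which rearranges into $T \ge \frac{1-\epsilon_T}{1-\delta}L$ with $\epsilon_T = \epsilon+\tilde\epsilon_T$ collecting the security slack together with the vanishing Fano term.

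I expect the main obstacle to be the clean justification of the inequality $H(Y^T\mid Z^T)\le (1-\delta)T$ and, relatedly, the degradedness/Markov structure. Two points need care: (i) one must verify that the mixer's private randomness (the stochastic choice of a vector from each bin, which makes $\mathbf{X}$ random) is independent of the erasures $B^T$, so that conditioning on $Y^T$ genuinely decouples $W$ from $Z^T$ and the chain $W - Y^T - Z^T$ holds despite the randomized testing matrix; and (ii) the conditional-entropy bound must be argued in expectation over $B^T$, using only that $Y^T$ is binary (so that $H$ of the erased block is at most the number of erasures, regardless of the possibly low entropy of $Y^T$), which is exactly what keeps the bound tight enough to yield the stated $\frac{1}{1-\delta}$ scaling. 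The remaining rearrangement into the exact form $\epsilon_T=\epsilon+\tilde\epsilon_T$ is then routine.
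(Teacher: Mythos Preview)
Your proposal is correct and follows essentially the same wiretap-style converse as the paper: Fano's inequality for reliability, the secrecy constraint $I(W;Z^T)\le \epsilon T$, and a bound on the residual information $I(W;Y^T\mid Z^T)\le H(Y^T\mid Z^T)$, which the paper writes as $I(W;\bar Z\mid Z)\le H(\bar Z\mid Z)$ after splitting $Y=(Z,\bar Z)$.

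The one substantive difference is in how the conditional entropy is bounded. The paper drops the conditioning, $H(\bar Z\mid Z)\le H(\bar Z)$, and then, since $\bar Z$ has a random length (the number of erasures), invokes a Chernoff concentration argument on $|\mathcal{\bar E}|$ to obtain $H(\bar Z)\le T(1-\delta)(1+\gamma)+T\,2^{-T(1-\delta)f(\gamma)}$, absorbing the slack into $\epsilon_T$. Your route is more direct: you keep the conditioning, note that $B^T$ is a deterministic function of $Z^T$, and bound $H(Y^T\mid Z^T)$ by the \emph{expected} number of erased (binary) coordinates, giving $(1-\delta)T$ exactly with no concentration step. This is cleaner and avoids the extra $\gamma$ and Chernoff machinery; the paper's version, on the other hand, would extend more transparently to a ``Wiretap~II'' adversary that chooses which $\delta T$ tests to observe (a worst-case rather than i.i.d.\ erasure pattern). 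Either way the final rearrangement into $T\ge \frac{1-\epsilon_T}{1-\delta}\log\binom{N}{K}$ is the same up to how one packages $\epsilon$ and the Fano term into $\epsilon_T$.
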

The lower bound is derived using Fano's inequality to address reliability, assuming a negligible mutual information at the eavesdropper, thus keeping an eavesdropper with leakage probability $\delta$ ignorant, and information inequalities bounding the rate of the message on the one hand, and the data Eve \emph{does not see} \tifs{on the other hand}. Compared with the lower bound without security constraints, it is increased by the multiplicative factor $\frac{1}{1-\delta}$.
\subsection{Secrecy capacity in SGT}
Returning to the analogy in \cite{baldassini2013capacity} between channel capacity and group testing, one might define by $C_s$ the (asymptotic) minimal threshold value for $\log\binom{N}{K}/T$, above which no reliable and secure scheme is possible. Under this definition, the result in this paper show that $C_s = (1-\delta)C $, where $C$ is the capacity without the security constraint. Clearly, this can be written as
\begin{equation*}
C_s = C - \delta C,
\end{equation*}
raising the usual interpretation as the \emph{difference} between the capacity to the legitimate decoder and that to the eavesdropper \cite{C13}. Note that as the effective number of tests Eve sees is $T_e = \delta T$, her GT capacity is $\delta C$.
\subsection{Efficient Algorithms}
Under the SGT model definition given in \Cref{formulation}, we further consider a computationally efficient algorithm at the legitimate decoder. \tifs{Specifically, we analyze the \emph{Definitely Non-Defective} (DND) algorithm (also called \emph{Combinatorial Orthogonal Matching Pursuit} (COMP)), considered for the non-secure GT model in the literature \cite{chan2014non,aldridge2014group}.} The idea for this efficient decoding algorithm was presented first in \cite{kautz1964nonrandom}, and later in several variants \cite{chen2008survey,chan2011non,lo2013efficient,malyutov2013search,aldridge2019group}. The theorem below states that indeed efficient decoding (with arbitrarily small error probability) and secrecy are possible, at the price of a higher $T$. Interestingly, the theorem applies to any $K$, and not necessarily only to $K = O(1)$. This is, on top of the reduced complexity, an important benefit of the suggested algorithm.
\begin{theorem}\label{direct theorem5}
Assume an SGT model with $N$ items, out of which \tifs{any} $K$ are defective. Then, for any $\delta < \frac{1}{2}\left(1-\frac{\ln 2}{K}\right)$, there exists an \tifs{DND} efficient decoding algorithm, requiring $O(N^2T)$ operations, such that \tifs{with $\frac{1}{T}I(W;Z^T) \leq \epsilon_T + \epsilon^{\prime}$ where $\epsilon_T \to 0$ as $T \to \infty$.}, if the number of tests satisfies
\[
T \ge \frac{1+\epsilon}{\frac{1}{2}(1-\frac{\ln 2}{K})-\delta} K \log N
\]
its error probability is upper bounded by
\[
P_e \leq N^{-\epsilon}.
\]
\end{theorem}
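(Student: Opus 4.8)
The plan is to reuse the stochastic (random-binning) code construction underlying \Cref{direct theorem1} and only replace the Maximum Likelihood decoder by the combinatorial \emph{Definite Non-Defective} (DND/COMP) rule; the point is that secrecy is a property of the encoder and of Eve's erasure channel alone, and is therefore inherited verbatim from the analysis in \Cref{LowerBound}, so the only genuinely new work is bounding the error probability of the efficient decoder. Concretely, I would keep one ``Bin $j$'' per item $j\in\{1,\dots,N\}$, each bin being a list of $2^{TR_2}$ independent length-$T$ codewords with i.i.d.\ $\mathrm{Bernoulli}(p)$ entries, with the group-testing choice $p=\ln 2/K$ (so that each test is negative with probability $(1-p)^K\approx 1/2$) and with the randomization rate fixed at $R_2=\delta/K$, i.e.\ the total injected private randomness is $K\cdot TR_2=\delta T$ bits. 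This is exactly the randomness budget that the secrecy argument of \Cref{LowerBound} needs to drive $I(W;Z^T)\to 0$ against an eavesdropper who sees a $\delta$-fraction of the (essentially unbiased) outcomes, so strong secrecy carries over unchanged; I would state this and move on.

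The decoder I would analyze declares item $j$ \emph{non-defective} iff every codeword in Bin $j$ is inconsistent with the observed outcome vector, i.e.\ has a $1$ in some negative test; all remaining items are declared defective. First observe there are never false negatives: a truly defective item's actually-chosen codeword $X_j^T$ satisfies $X_j^T\preceq Y^T$ coordinatewise by definition of the Boolean OR, so a defective item always retains a consistent codeword and is never eliminated. Hence $\hat{\mathcal K}\supseteq\mathcal K$ always, and an error occurs iff some \emph{non}-defective item survives, i.e.\ iff some non-defective bin contains a codeword $c\preceq Y^T$. For a fixed non-defective $j$ the codewords of Bin $j$ are independent of $Y^T$, so for a single such codeword
\begin{equation*}
\Pr[c\preceq Y^T]=\bigl(1-p(1-p)^K\bigr)^T\le \exp\!\bigl(-Tp(1-p)^K\bigr)=2^{-\frac{T}{K}(1-\frac{\ln 2}{K})^K},
\end{equation*}
where the middle step averages over $Y^T$ using $\Pr[Y(t)=0]=(1-p)^K$ and the coordinatewise independence of the tests. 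A union bound over the $2^{TR_2}$ codewords of the bin and then over the $N-K$ non-defective items gives $P_e\le (N-K)\,2^{TR_2}\,2^{-\frac{T}{K}(1-\frac{\ln 2}{K})^K}$.

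To reach the clean constant in the statement I would lower bound the exponent by the elementary inequality $(1-\frac{\ln 2}{K})^K\ge \tfrac12(1-\frac{\ln 2}{K})$ (equivalently $(1-\frac{\ln 2}{K})^{K-1}\ge \tfrac12$, which one checks holds for every $K\ge 1$) and substitute $R_2=\delta/K$, yielding
\begin{equation*}
P_e\le N\cdot 2^{-\frac{T}{K}\left(\frac12(1-\frac{\ln 2}{K})-\delta\right)}.
\end{equation*}
Requiring the right-hand side to be at most $N^{-\epsilon}$ is exactly the hypothesis $T\ge \frac{1+\epsilon}{\frac12(1-\frac{\ln 2}{K})-\delta}K\log N$, and the denominator is positive precisely under the stated range $\delta<\frac12(1-\frac{\ln 2}{K})$. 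For the running time, deciding whether Bin $j$ contains a codeword consistent with $Y^T$ costs $O(|\text{Bin}_j|\,T)$ per item, giving $O(N\cdot 2^{TR_2}\,T)$ overall, which matches the claimed $O(N^2T)$ when the bin size is $O(N)$.

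The step I expect to require the most care is the interaction between secrecy and reliability: the secrecy constraint forces the bin size up to $2^{\delta T/K}$, and it is precisely this extra multiplicative factor in the union bound---absent in ordinary COMP---that converts the usual DND rate $\tfrac12(1-\frac{\ln 2}{K})$ into $\tfrac12(1-\frac{\ln 2}{K})-\delta$. Making sure the independence between a non-defective bin and $Y^T$ genuinely holds, that the averaging over the random construction simultaneously yields small \emph{expected} error probability and small leakage (so that a single code is both reliable and secure), and that the same $R_2=\delta/K$ indeed suffices for the secrecy argument of \Cref{LowerBound}, is where the reasoning must be checked carefully rather than treated as routine.
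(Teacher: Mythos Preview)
Your proposal is correct and follows essentially the same route as the paper: the same binned random codebook is reused, the DND/COMP rule is applied bin-wise, the per-codeword survival probability $(1-p(1-p)^K)^T$ is bounded via $1-x\le e^{-x}$ and $(1-\ln 2/K)^{K-1}\ge 1/2$, and a union bound over $M(N-K)$ codewords yields the stated $N^{-\epsilon}$ bound and complexity. The only cosmetic difference is that the paper parametrizes $T=\beta K\log N$ and tracks $\beta$ through the chain before converting to~$\epsilon$, and uses $M=2^{T(\delta-\epsilon)/K}$ (then drops the slack) rather than your $R_2=\delta/K$; neither affects the argument.
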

The construction of the DND GT algorithm, together with the proofs of reliability and secrecy are deferred to \Cref{efficient_algorithms}. Clearly, the benefits of the algorithm above come at the price of additional tests and a smaller range of $\delta$ it can handle.

\tifs{The results in \Cref{direct theorem1} using ML decoding show that any value of $\delta <1$ is possible (with a $\frac{1}{1-\delta}$ toll on $T$ compared to non-secure GT). The results in \Cref{direct theorem5} suggest that using the DND efficient algorithm, one can have a small error probability only for $\delta < 1/2$, and the toll on $T$ is greater than $\frac{1}{\frac{1}{2}-\delta}$. This is consistent with the fact that this algorithm is known to achieve only slightly more than half of the capacity for non-secure GT \cite{aldridge2014group}. Both these results may be due to an inherent deficiency in the decoding algorithm.}
\section{Code Construction and a Proof for \Cref{direct theorem1}} \label{LowerBound}
In order to keep the eavesdropper, which obtains only a fraction $\delta$ of the outcomes, ignorant regarding the status of the items, we \emph{randomly} map the items to the tests. Specifically, as depicted in \Cref{fig:WiretapCoding}, for each item we generate a bin, containing several rows. The number of such rows \emph{corresponds} to the number of tests that the eavesdropper can obtain, yet, unlike wiretap channels, \emph{it is not identical} to the number of outcomes that the eavesdropper can obtain, i.e., to Eve's capacity, and should be normalized by the number of defective items.
Then, for the $j$-th item, we randomly select a row from the $j$-th bin. This row will determine in which tests the item will participate.

In order to rigorously describe the construction of the matrices and bins, determine the exact values of the parameters (e.g., bin size), and analyze the reliability and secrecy, we first briefly review the representation of the GT problem as a channel coding problem \cite{atia2012boolean}, together with the components required for SGT.

An SGT code consists of an index set $\mathcal{W} =\{1,2,\ldots \binom{N}{K}\}$, its $w$-th item corresponding to the $w$-th subset $\mathcal{K}\subset \{1,\ldots,N\}$; A discrete memoryless source of randomness $(\mathcal{R},p_R)$, with known alphabet $\mathcal{R}$ and known statistics $p_R$; An encoder,
\begin{equation*}
f : \mathcal{W} \times \mathcal{R} \rightarrow \mathcal{X}_{S_w}\in\{0,1\}^{K\times T}
\end{equation*}
which maps the index $W$ of the defective items to a matrix $\textbf{X}^{T}_{S_{w}}$ of codewords, each of its rows corresponding to a different item in the index set $S_w$, $w\in \mathcal{W}$, $|S_w|=K$. Note that the notation $\textbf{X}^{T}_{S_{w}}$ depends on the actual randomness selected at the encoder. However, we do not include the dependence on $\mathcal{R}$ in the notation to keep this notation simple.
The need for a \emph{stochastic encoder} is similar to most encoders ensuring information theoretic security, as randomness is required to confuse the eavesdropper about the actual information \cite{C13}. Hence, we define by $R_K$ the local randomness variable at the mixer, encompassing the randomness required at the $K$ defective items. Further define $M$ as the number of rows in each bin. Thus, $\log(M^K)=H(R_K)$. Note that this computation refers only to the amount of randomness used for the $K$ actually defective items. 

At this point, an important clarification is in order. The lab, of course, does not know which items are defective. Thus, operationally, it needs to select a row for each item. However, in \emph{the analysis}, since only the defective items affect the output (that is, only their rows are ORed together to give $Y^T$), we refer to the ``message" as the index of the defective set $w$ and refer only to the random variable $R_K$ required to choose the rows in their bins.
In other words, unlike the analogous communication problem, in GT, \emph{nature} performs the actual mapping from $W$ to $\textbf{X}^{T}_{S_{w}}$. The mixer only mixes the blood samples according to the (random in this case) testing matrix it has.

A decoder at the legitimate user is a map
\begin{equation*}
\hat{W} : \mathcal{Y}^{T} \rightarrow \mathcal{W}.
\end{equation*}
The probability of error is $P(\hat{W}(Y^{T})\neq W)$. The probability that an outcome test leaks to the eavesdropper is $\delta$. We assume a memoryless model, i.e., each outcome $Y(t)$ depends only on the corresponding input $X_{S_w}(t)$, and the eavesdropper observes $Z(t)$, generated from $Y(t)$ according to
\begin{equation*}
p(Y^T,Z^T|X_{S_w})=\prod_{t=1}^{T}p(Y(t)|X_{S_w}(t))p(Z(t)|Y(t)).
\end{equation*}
We may now turn to the detailed construction and analysis.
\ifdouble
\begin{figure}
  \centering
  \includegraphics[trim= 9.3cm 8cm 2cm 7.6cm,clip,scale=0.9]{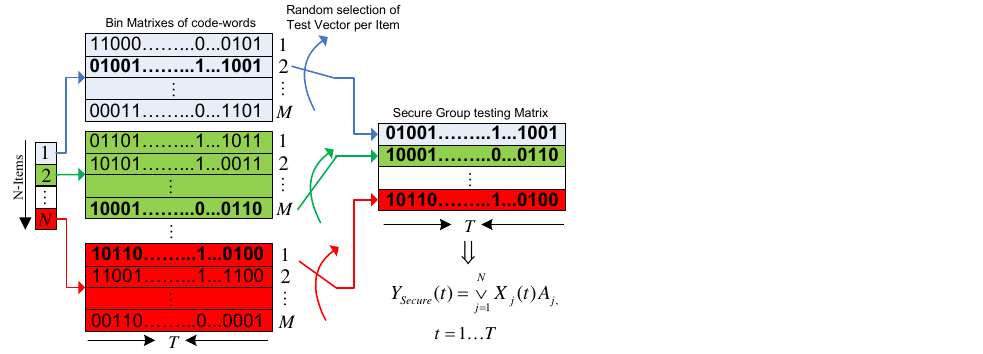}
  \caption{Binning and encoding process for a SGT code.}
  \label{fig:WiretapCoding}
  \vspace{-0.4cm}
\end{figure}
\else
\begin{figure}
  \centering
  \includegraphics[trim= 6.3cm 8cm 2cm 7.6cm,clip,scale=1]{Wiretap_coding5_one_col.pdf}
  \caption{Binning and encoding process for a SGT code.}
  \label{fig:WiretapCoding}
\end{figure}
\fi
\subsubsection{Codebook Generation}
Choose M such that $$\log_2(M) = T(\delta-\epsilon^{\prime})/\tifs{K}$$ for some $\epsilon^{\prime}>0$. $\epsilon^{\prime}$ will affect the equivocation.
Using a distribution $P(X^{T})=\prod^{T}_{i=1}P(x_i)$, for each item generate $M$ independent and identically distributed codewords $x^{T}_{j}(m)$, $1 \leq m \leq M$.
That is, specifically each codeword is generated randomly under a fixed Bernoulli distribution of $\left(p=\frac{\ln(2)}{K},q=1-\frac{\ln(2)}{K}\right)$\footnote{\label{note10}\tifs{The optimal Bernoulli parameter $p$ for the testing matrix is $1-2^{-1/K}$ which results in half positive pool-tests and half negative on average. To simplify the expressions, we approximate $1-2^{-1/K}$ by $\ln(2)/K$ \cite{aldridge2017almost}. Yet, note that it is possible to replace $\ln(2)/K$ by $1-2^{-1/K}$, or for preciseness carry the error term $O(1/K^2)$, which is negligible for large enough $K$.}}.
The codebook is depicted in the left hand side of \Cref{fig:WiretapCoding}.
Reveal the codebook to Alice and Bob. We assume Eve may have the codebook as well.
\subsubsection{Testing}
For each item $j$, the mixer/lab selects uniformly at random one codeword $x^{T}_{j}(m)$ from the $j$-th bin.
Therefore, the SGT matrix contains $N$ randomly selected codewords of length $T$, one for each item, defective or not. Amongst is an unknown subset $X^{T}_{S_{w}}$, with the index $w$ representing the true defective items. An entry of the $j$-th random codeword is $1$ if the $j$-item is a member of the designated pool test and $0$ otherwise.
\subsubsection{Decoding at the Legitimate Receiver}
The decoder looks for a collection of $K$ codewords $X_{S_{\hat{w}}}^{T}$, \textit{at most one from each bin}, for which $Y^T$ is most likely. Namely,
\begin{equation*}
P(Y^{T}|X_{S_{\hat{w}}}^{T})>P(Y^{T}|X_{S_{w}}^{T}), \forall w \neq \hat{w}.
\end{equation*}
Then, the legitimate user (Bob) declares $\hat{W}(Y^T)$ as the set of bins in which the rows $\hat{w}$ reside.
\subsection{Reliability}\label{relia}
Let $(\mathcal{S}^{1},\mathcal{S}^{2})$ denote a partition of the defective set $S$ into disjoint sets $\mathcal{S}^{1}$ and $\mathcal{S}^{2}$, with cardinalities $i$ and $K-i$, respectively.\footnote{This partition helps decompose the error events into classes,  where in class $i$ one already knows $K-i$ defective items, and the dominant error event corresponds to missing the other $i$. Thus, it is easier to present the error event as one ``codeword" against another.} \tifs{Recall, $X_{\mathcal{S}}$ is a column of the matrix $\textbf{X}^{T}_{S}$.}
Let $I(X_{\mathcal{S}^1};X_{\mathcal{S}^2},Y)$, with fixed non-random $(\mathcal{S}_1; \mathcal{S}_2)$, denote the mutual information between $X_{\mathcal{S}^1}$ and $X_{\mathcal{S}^2},Y$, under the i.i.d.\ distribution with which the codebook was generated and remembering that $Y$ is the output of a Boolean channel.
The following lemma is a key step in proving the reliability of the decoding algorithm.
\begin{lemma}\label{direct lemma1}
\tifs{For $K=O(1)$,} if the number of tests satisfies
\begin{eqnarray*}
T  \geq (1+\varepsilon)\cdot\max_{i=1,\ldots ,K} \frac{\log\binom{N-K}{i}M^i}{I(X_{\mathcal{S}^1};X_{\mathcal{S}^2},Y)},
\end{eqnarray*}
then, under the codebook above, as $N\rightarrow \infty$ the average error probability approaches zero.
\end{lemma}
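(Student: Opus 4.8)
The plan is to follow the channel-coding error analysis for group testing of Atia and Saligrama, adapted to the binned/stochastic codebook. Since the decoder outputs only the \emph{set of bins} $\hat{W}$, within-bin codeword confusions do not constitute reconstruction errors; only collections whose bin-set differs from the true defective set $S$ matter. I would fix the true defective set $S$ together with its nature-chosen codewords, and decompose the error event $\{\hat{W}\neq W\}$ according to overlap: for each $i\in\{1,\dots,K\}$, a ``type-$i$'' error occurs when the ML-optimal collection retains $K-i$ of the true defective bins and replaces the remaining $i$ by non-defective bins. Writing $\mathcal{S}^1$ (size $i$) for the dropped defective items and $\mathcal{S}^2$ (size $K-i$) for the retained ones, this is exactly the partition in the statement.

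Next I would apply a union bound over competing collections of each type. The dominant contribution comes from collections that agree with the true codewords on the $K-i$ retained bins and use \emph{fresh} codewords on the $i$ newly introduced non-defective bins; the number of such collections is $\binom{K}{i}\binom{N-K}{i}M^i$, where $\binom{K}{i}\binom{N-K}{i}$ counts the choice of which bins to swap out and in, and $M^i$ counts the codeword choices in the $i$ fresh bins. One must argue that competing collections using \emph{non-true} codewords in the retained bins do not worsen the exponent: since those extra codewords only contribute additional $1$'s to the Boolean OR, they can only make the competitor less likely to beat the truth, so they are subsumed by the dominant term and do not produce an analogous $M^{K-i}$ blow-up.

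For the pairwise term I would bound, for a fixed competing collection of type $i$, the probability (over the i.i.d.\ codebook) that it is at least as likely as the truth under $P(Y^{T}\mid\cdot)$. Using the memoryless factorization $p(Y^T\mid X)=\prod_{t}p(Y(t)\mid X(t))$ together with the i.i.d.\ codeword generation, a standard joint-AEP / change-of-measure argument yields a pairwise bound of $2^{-T I(X_{\mathcal{S}^1};X_{\mathcal{S}^2},Y)}$; bounding the ML error by that of a joint-typicality decoder suffices to extract this exponent. Combining gives $P_e \le \sum_{i=1}^{K}\binom{K}{i}\binom{N-K}{i}M^i\,2^{-T I(X_{\mathcal{S}^1};X_{\mathcal{S}^2},Y)}$. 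Each summand equals $2^{-T[\,I-(1/T)\log(\binom{K}{i}\binom{N-K}{i}M^i)\,]}$, so requiring $T\ge(1+\varepsilon)\max_{i}\log(\binom{N-K}{i}M^i)/I(X_{\mathcal{S}^1};X_{\mathcal{S}^2},Y)$ (the factor $\binom{K}{i}\le 2^K$ being negligible for $K=O(1)$) forces every exponent to be strictly positive, whence the finite sum decays and $P_e\to 0$ as $N\to\infty$.

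I expect the main obstacle to be the counting in the union-bound step: rigorously justifying the $M^i$ multiplicity while showing the binning structure introduces no extra dominant error events. In particular I must cleanly exclude confusions \emph{within} the retained defective bins (which leave $\hat{W}$ unchanged and so are not errors), and confirm that the fresh-bin codewords contribute exactly $M^i$ rather than the retained bins contributing an additional factor. The pairwise exponent is the other delicate point, since the decoder is genuinely ML rather than typicality-based; I would handle this by the standard domination of the ML error probability by the typicality-decoding error probability, which preserves the mutual-information exponent $I(X_{\mathcal{S}^1};X_{\mathcal{S}^2},Y)$.
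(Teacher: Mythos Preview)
Your overall strategy---union bound over overlap classes with pairwise exponent $I(X_{\mathcal{S}^1};X_{\mathcal{S}^2},Y)$---is the paper's, though the paper extracts the mutual information via a Gallager-type bound $E_o(\rho)$ and a Taylor expansion at $\rho=0$ rather than a typicality argument; either route is fine for that step.

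The genuine gap is exactly where you anticipated it, and your proposed patch does not close it. Your claim that competitors using non-true codewords on the $K-i$ retained bins ``only contribute additional $1$'s to the Boolean OR'' is incorrect as stated: a fresh codeword in a retained bin \emph{replaces} the true codeword in the competing hypothesis, it does not augment it, so no monotonicity of the competitor's OR follows. And even granting that each such competitor has smaller pairwise error probability than one keeping the true codewords, you still have on the order of $M^{K-i}$ of them, and you give no argument that this multiplicity is absorbed by the reduction in the pairwise term; ``subsumed by the dominant term'' is an assertion, not a proof. The paper resolves this by a regrouping rather than a monotonicity argument: since all codewords are i.i.d., the pairwise probability that a competing $K$-tuple ties the truth depends only on how many codewords it \emph{shares} with the truth, regardless of which bins they sit in. A type-$i$ competitor that keeps only $j<K-i$ of the true codewords therefore has exactly the pairwise probability of what the paper calls an $E'_{K-j}$ event (the ``clean'' event where the retained codewords are the true ones). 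Summing over $j$ then yields $\sum_i P(E_i)\le 2^K\sum_i P(E'_i)$, and the extra $2^K$ becomes the harmless $K/T$ term in the exponent for $K=O(1)$. Replace your Boolean-OR heuristic with this reduction and the remainder of your outline goes through.
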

\iftifs
\Cref{direct lemma1}, is an adaptation of the results in \cite{atia2012boolean} to the codebook required for SGT. Specifically, to obtain a bound on the required number of tests as given in \Cref{direct lemma1}, we first state Lemma 2, which bounds the error probability of the ML decoder using a Gallager-type bound \cite{gallager1968information}.
\else
\tifs{To prove \Cref{direct lemma1}, which extends the results in \cite{atia2012boolean} to the codebook required for SGT, namely, to obtain a bound on the required number of tests, we first state \Cref{error lemma2} below, which bounds the error probability of the ML decoder using a Gallager-type bound \cite{gallager1968information}. \Cref{error lemma2} with then be used to prove \Cref{direct lemma1}.}
\fi

\begin{definition}
\tifs{The} error event $E_{i}$ in the ML decoder is defined as the event of mistaking the true set for a set which differs from it in exactly $i$ items.
\end{definition}
\begin{lemma}\label{error lemma2}
The error probability $P(E_{i})$ is bounded by
\begin{equation*}
P(E_{i}) \leq 2^{-T\left(E_{o}(\rho)-\rho\frac{\log\binom{N-K}{i}M^i}{T}-\frac{\log\binom{K}{i}}{T}\ - \frac{K}{T}\right)},
\end{equation*}
where the error exponent $E_{o}(\rho)$ is given by
\ifdouble
\begin{multline}
E_{o}(\rho)= -\log \sum_{Y\in \{0,1\}}\sum_{X_{\mathcal{S}^2}}\Bigg[\sum_{X_{\mathcal{S}^1}}P(X_{\mathcal{S}^1})\\
p(Y,X_{\mathcal{S}^2}|X_{\mathcal{S}^1})^{\frac{1}{1+\rho}}\Bigg]^{1+\rho},\text{ } 0 \leq\rho \leq1.\nonumber
\end{multline}
\else
\begin{equation*}
E_{o}(\rho)= -\log \sum_{Y\in \{0,1\}}\sum_{X_{\mathcal{S}^2}}\Bigg[\sum_{X_{\mathcal{S}^1}}P(X_{\mathcal{S}^1})
p(Y,X_{\mathcal{S}^2}|X_{\mathcal{S}^1})^{\frac{1}{1+\rho}}\Bigg]^{1+\rho},\text{ } 0 \leq\rho \leq1.
\end{equation*}
\fi
\end{lemma}
\tifs{The proof of \Cref{error lemma2} is given in Appendix \ref{appendix:APPENDIX A}. It is based on \cite[Lemma III.1]{atia2012boolean}, which studied non-secure group testing, however, as shown in Appendix \ref{appendix:APPENDIX A} for the secure setting, there is a key difference in the analysis of the error probability.}

Note that unlike the typical encoders ensuring information theoretic security \cite{C13}, for which the exponent of the size of each bin equals Eve's capacity, in the code suggested the number of the codewords in each bin is normalized by the number of defective items $K$. This reduces the number of pool tests required as well, \tifs{while providing sufficient randomness} to obtain the secrecy constraint as we prove in \Cref{LowerBoundLeakage}. This normalization is possible since in the output sequence, Eve actually \tifs{can obtain} only a sum of $K$ rows, and does not have access to any specific row.

\begin{proof}[Proof of \Cref{direct lemma1}]
\tifs{The proof is similar to applications of the Gallager-type bound \cite{gallager1968information} and specifically that in \cite[Proof of Theorem III.1]{atia2012boolean}. However, due the different code construction, the details are different. Specially, for each item there is a \emph{bin} of $M$ codewords, from which the decoder has to choose.}

\tifs{Define
\begin{equation}\label{eq:exp_bound}
\emph{f}(\rho) = E_{o}(\rho)-\rho\frac{\log\binom{N-K}{i}M^i}{T}-\frac{\log\binom{K}{i}}{T}-\frac{K}{T}.
\end{equation}
Since $0 \leq \rho \leq 1$ can be optimized, we wish to show that  $T\emph{f}(\rho)\rightarrow \infty$ as $N \rightarrow \infty$ for some $\rho$ in this range. If this is done for all $i$, then due to the resulting exponential decay of $P(E_i)$, using a simple union bound will show that the error probability is small in total as well, completing the proof of \Cref{direct lemma1}.}

\tifs{Since the function $\emph{f}(\rho)$ is differentiable and has a power series expansion, using a Taylor series expansion in the neighborhood of $\rho=0$ we have
\begin{equation*}
\emph{f}(\rho) = \emph{f}(0) + \rho f'(0) + \frac{\rho^2}{2}f''(\psi)
\end{equation*}
for some $\psi\in (0,\rho)$. Now,}
\ifdouble
\tifs{\begin{eqnarray}
&& \hspace{-0.85cm} \frac{\partial E_{o}}{\partial\rho} |_{\rho = 0}\nonumber\\
&\hspace{-0.7cm}  = &\hspace{-0.5cm} \sum_Y \sum_{X_{\mathcal{S}^2}}[\sum_{X_{\mathcal{S}^1}}P(X_{\mathcal{S}^1})p(Y,X_{\mathcal{S}^2}|X_{\mathcal{S}^1})\log p(Y,X_{\mathcal{S}^2}|X_{\mathcal{S}^1})\nonumber\\
&\hspace{-0.7cm}  - &\hspace{-0.5cm} \sum_{X_{\mathcal{S}^1}}P(X_{\mathcal{S}^1})p(Y,X_{\mathcal{S}^2}|X_{\mathcal{S}^1}) \sum_{X_{\mathcal{S}^1}}P(X_{\mathcal{S}^1})p(Y,X_{\mathcal{S}^2}|X_{\mathcal{S}^1})]\nonumber\\
&\hspace{-0.7cm}  = &\hspace{-0.5cm} \sum_Y \sum_{X_{\mathcal{S}^2}}\sum_{X_{\mathcal{S}^1}}P(X_{\mathcal{S}^1})p(Y,X_{\mathcal{S}^2}|X_{\mathcal{S}^1})\nonumber\\
&& \hspace{-0.85cm} \log\frac{p(Y,X_{\mathcal{S}^2}|X_{\mathcal{S}^1})}{\sum_{X_{\mathcal{S}^1}}P(X_{\mathcal{S}^1})p(Y,X_{\mathcal{S}^2}|X_{\mathcal{S}^1})}
%
= I(X_{\mathcal{S}^1};X_{\mathcal{S}^2},Y).\nonumber
\end{eqnarray}}
\else
\tifs{\begin{eqnarray}
&& \hspace{-0.85cm} \frac{\partial E_{o}}{\partial\rho} |_{\rho = 0}\nonumber\\
&\hspace{-0.7cm}  = &\hspace{-0.5cm} \sum_Y \sum_{X_{\mathcal{S}^2}}[\sum_{X_{\mathcal{S}^1}}P(X_{\mathcal{S}^1})p(Y,X_{\mathcal{S}^2}|X_{\mathcal{S}^1})\log p(Y,X_{\mathcal{S}^2}|X_{\mathcal{S}^1})\nonumber\\
&\hspace{-0.7cm}  - &\hspace{-0.5cm} \sum_{X_{\mathcal{S}^1}}P(X_{\mathcal{S}^1})p(Y,X_{\mathcal{S}^2}|X_{\mathcal{S}^1}) \sum_{X_{\mathcal{S}^1}}P(X_{\mathcal{S}^1})p(Y,X_{\mathcal{S}^2}|X_{\mathcal{S}^1})]\nonumber\\
&\hspace{-0.7cm}  = &\hspace{-0.5cm} \sum_Y \sum_{X_{\mathcal{S}^2}}\sum_{X_{\mathcal{S}^1}}P(X_{\mathcal{S}^1})p(Y,X_{\mathcal{S}^2}|X_{\mathcal{S}^1}) \log\frac{p(Y,X_{\mathcal{S}^2}|X_{\mathcal{S}^1})}{\sum_{X_{\mathcal{S}^1}}P(X_{\mathcal{S}^1})p(Y,X_{\mathcal{S}^2}|X_{\mathcal{S}^1})}\nonumber\\
&\hspace{-0.7cm}  = &\hspace{-0.5cm} I(X_{\mathcal{S}^1};X_{\mathcal{S}^2},Y).\nonumber
\end{eqnarray}}
\fi
\tifs{Hence, since $E_o(0)=0$, we have
\begin{multline*}
    T\emph{f}(\rho) = T\rho\Bigg(I(X_{\mathcal{S}^1};X_{\mathcal{S}^2},Y) - \frac{\log\binom{N-K}{i}M^i}{T}\Bigg)\\ - \log\binom{K}{i}-K+T\frac{\rho^2}{2}E_o''(\psi).
\end{multline*}
Consequently, if $T  \geq (1+\varepsilon)\max_{i=1,\ldots ,K} \frac{\log\binom{N-K}{i}M^i}{I(X_{\mathcal{S}^1};X_{\mathcal{S}^2},Y)}$ for some $\varepsilon >0$, we have
\begin{multline*}
    T\emph{f}(\rho) \ge \\ T\rho\Bigg(I(X_{\mathcal{S}^1};X_{\mathcal{S}^2},Y) \left(\frac{\varepsilon}{1+\varepsilon}\right) + \frac{\rho}{2}E_o''(\psi) \Bigg)  -\log\binom{K}{i}-K.
\end{multline*}
Note that $E_o''(\psi)$ is negative \cite{gallager1968information}. However, it is independent of the other constants and $T$, hence choosing 
\[
0 < \rho < \frac{2I(X_{\mathcal{S}^1};X_{\mathcal{S}^2},Y)\left(\frac{\varepsilon}{1+\varepsilon}\right)}{|E_o''(\psi)|}
\]
and remembering that both $\log\binom{K}{i}$ and $K$ are fixed and independent of $T$ as well, we have $T\emph{f}(\rho) \rightarrow \infty$ as $N$ (and consequently $T$) go to $\infty$.}
\end{proof}
The expression $I(X_{\mathcal{S}^1};X_{\mathcal{S}^2},Y)$ in \Cref{direct lemma1} is critical to understand how many tests are required, yet it is not a function of the problem parameters in any straight forward \tifs{manner}. We now bound it to get a better handle on $T$.
\begin{claim}\label{InformationClaim}
For large $K$, and under a fixed input distribution for the testing matrix $(\frac{\ln(2)}{K},1-\frac{\ln(2)}{K})$, the mutual information between $X_{\mathcal{S}^1}$ and $(X_{\mathcal{S}^2},Y)$ is lower bounded by
\begin{equation*}
I(X_{\mathcal{S}^1};X_{\mathcal{S}^2},Y)\geq \frac{i}{K}.
\end{equation*}
\end{claim}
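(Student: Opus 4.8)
The plan is to recognize that, although $I(X_{\mathcal{S}^1};X_{\mathcal{S}^2},Y)$ looks complicated, per test the quantities $X_{\mathcal{S}^1}=\bigvee_{d\in\mathcal{S}^1}X_d$ and $X_{\mathcal{S}^2}=\bigvee_{d\in\mathcal{S}^2}X_d$ are just Boolean ORs over the two disjoint parts of the defective set, and $Y=X_{\mathcal{S}^1}\vee X_{\mathcal{S}^2}$. Since the two groups involve disjoint items, $X_{\mathcal{S}^1}$ and $X_{\mathcal{S}^2}$ are independent and $Y$ is a deterministic function of the pair, so the first step is the reduction
\[
I(X_{\mathcal{S}^1};X_{\mathcal{S}^2},Y)=I(X_{\mathcal{S}^1};X_{\mathcal{S}^2})+I(X_{\mathcal{S}^1};Y\mid X_{\mathcal{S}^2})=H(Y\mid X_{\mathcal{S}^2}),
\]
using independence to kill the first term and $H(Y\mid X_{\mathcal{S}^1},X_{\mathcal{S}^2})=0$ for the conditional term.

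Next I would evaluate $H(Y\mid X_{\mathcal{S}^2})$ by cases. With $p=\tfrac{\ln 2}{K}$ we have $P(X_{\mathcal{S}^2}=0)=(1-p)^{K-i}$; conditioned on $X_{\mathcal{S}^2}=0$ the output equals $X_{\mathcal{S}^1}$, whereas conditioned on $X_{\mathcal{S}^2}=1$ it is the constant $1$. Hence the exact identity
\[
I(X_{\mathcal{S}^1};X_{\mathcal{S}^2},Y)=(1-p)^{K-i}\,h_b\!\left((1-p)^{i}\right).
\]
I would then pass to the large-$K$ regime. Writing $\alpha=i/K\in(0,1]$ and using $(1-\tfrac{\ln 2}{K})^{m}=2^{-m/K}\bigl(1+O(m/K^2)\bigr)$, which is uniform over $1\le m\le K$ since $m/K^2\le 1/K$, we get $(1-p)^{i}\to 2^{-\alpha}$ and $(1-p)^{K-i}\to 2^{\alpha-1}$. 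This reduces the claim to the single-variable inequality $2^{\alpha-1}h_b(2^{-\alpha})\ge\alpha$ for all $\alpha\in(0,1]$.

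To close the argument I would prove this scalar inequality. Substituting $u=2^{-\alpha}\in[1/2,1)$, so that $\alpha=-\log u$ and $2^{\alpha-1}=\tfrac{1}{2u}$, and expanding the binary entropy, a direct computation gives
\[
2^{\alpha-1}h_b(2^{-\alpha})-\alpha=\frac{\phi(u)-\phi(1-u)}{2u},\qquad \phi(t):=t\log t,
\]
so (as $u>0$) it suffices to show $\psi(u):=\phi(u)-\phi(1-u)\ge 0$ on $[1/2,1]$. Here $\psi(1/2)=0$ and $\psi(1)=0$ (using $\phi(0)=0$), while $\psi'(u)=\phi'(u)+\phi'(1-u)=\log\!\bigl(e^{2}u(1-u)\bigr)$. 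Since $u(1-u)$ decreases from $1/4>e^{-2}$ at $u=1/2$ down to $0$, the derivative $\psi'$ is positive and then negative on $[1/2,1]$; thus $\psi$ rises from $0$ and falls back to $0$, remaining nonnegative throughout, which establishes the bound.

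The main obstacle is the scalar inequality itself: it is tight at \emph{both} endpoints ($\alpha\to 0$ and $\alpha=1$, equivalently $u=1$ and $u=1/2$), so no crude slack estimate will work and one genuinely needs the unimodality of $\phi(t)=t\log t$ as captured by the sign analysis of $\psi'$. A secondary point requiring care is making the replacement $(1-p)^{m}\to 2^{-m/K}$ uniform over all $1\le i\le K$, so that the lower bound $\tfrac{i}{K}$ is valid simultaneously for every $i$ appearing in the maximization of \Cref{direct lemma1}.
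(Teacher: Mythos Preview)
Your reduction $I(X_{\mathcal{S}^1};X_{\mathcal{S}^2},Y)=(1-p)^{K-i}\,h_b\!\left((1-p)^{i}\right)$ is exactly what the paper does (written there as $q^{K-i}H(q^i)$ with $q=1-p$), so up to that point the two proofs coincide. Where you diverge is in establishing the scalar inequality $2^{\alpha-1}h_b(2^{-\alpha})\ge\alpha$: the paper simply asserts that the mutual information is concave in $i$ and that it meets the linear bound $i/K$ at the two endpoints $i=1$ and $i=K$, appealing to a plot for verification; you instead give a self-contained analytic proof via the substitution $u=2^{-\alpha}$ and the unimodality of $\psi(u)=u\log u-(1-u)\log(1-u)$ on $[1/2,1]$. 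Your route is more rigorous---concavity is stated but not proved in the paper---and it also makes transparent why the inequality is tight at both ends, which the concavity-plus-endpoints argument needs as its premise. Both arguments share the same soft spot you already flag: because equality holds at the endpoints in the limiting form, the $O(1/K)$ correction from replacing $(1-\tfrac{\ln 2}{K})^{m}$ by $2^{-m/K}$ has to be absorbed somewhere, and both the paper and your proposal handle this only at the level of ``large $K$'' asymptotics.
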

\begin{proof}[Proof of \Cref{InformationClaim}] First, note that
\begin{eqnarray*}\label{eq:MutualInfB}
I(X_{\mathcal{S}^1};X_{\mathcal{S}^2},Y)
&&\hspace{-0.6cm} = I(X_{\mathcal{S}^1};X_{\mathcal{S}^2})+I(X_{\mathcal{S}^1};Y|X_{\mathcal{S}^2})\nonumber\\
&&\hspace{-0.6cm} \stackrel{(a)}{=} H(Y|X_{\mathcal{S}^2})-H(Y|X_{\mathcal{S}})\nonumber\\
&&\hspace{-0.6cm} \stackrel{(b)}{=} q^{K-i}H(q^{i})\nonumber\\
&&\hspace{-0.6cm} = q^{K-i}\left[q^{i}\log\frac{1}{q^{i}}
+  \left( 1 - q^{i}\right)\log\frac{1}{\left(1- q^{i}\right)}\right],
\end{eqnarray*}
where equality (a) follows since the rows of the testing matrix are independent, and (b) follows since $H(Y|X_{\mathcal{S}})$ is the uncertainty of the legitimate receiver given $X_{\mathcal{S}}$, thus when observing the noiseless outcomes of all pool tests, this uncertainty is zero. Also, note that the testing matrix is random and i.i.d.\ with distribution $(1-q,q)$, hence the probability for $i$ zeros is $q^i$.

Then, under a fixed input distribution for the testing matrix $(p=\frac{\ln(2)}{K},q=1-\frac{\ln(2)}{K})$ and large $K$ it is easy to verify that the bounds meet at the two endpoint of $i=1$ and $i=K$ (since $H(q) \to 0$ and $H(q^K) \to 1$ as $K$ grows), yet the mutual information is concave in $i$ (extending $i$ from integers to reals in $[1,K]$ to make this notion meaningful) thus the bound is obtained. This is demonstrated graphically in \Cref{figure:Mutual_Information_Bound}.
\end{proof}

Applying \Cref{InformationClaim} to the expression in \Cref{direct lemma1}, we have
\begin{equation}\label{eq:bound_apply}
\frac{\log\tifs{\left(\binom{N-K}{i}M^i\right)}}{I(X_{\mathcal{S}^1};X_{\mathcal{S}^2},Y)} \leq \frac{\log\tifs{\left(\binom{N-K}{i}M^i\right)}}{\frac{i}{K}}.
\end{equation}
Hence, substituting $M=2^{T\frac{\delta-\epsilon_K}{K}}$, a sufficient condition for reliability is
\begin{eqnarray*}
T &\ge& \max_{1 \leq i \leq K}\frac{1+\varepsilon}{\frac{i}{K}}\left[ \log\binom{N-K}{i} + \frac{i}{K}T(\delta-\epsilon_K) \right]
\end{eqnarray*}
Rearranging terms results in
\begin{eqnarray}\label{eq:reduce_h}
T & \ge & \max_{1 \leq i \leq K} \frac{1}{1-\delta +\epsilon_K-\varepsilon\delta+\varepsilon\epsilon_K}\frac{1+\varepsilon}{i/K}\log\binom{N-K}{i},\nonumber
\end{eqnarray}
where by reducing $\epsilon_K$ and $\varepsilon\epsilon_K$ we increase the bound on $T$, and with some constant $\varepsilon > 0$.
Noting that this is for large $K$ and $N$, and that $\varepsilon$ is independent of them. Hence, by replacing $\log\binom{N-K}{i}$ by it's asymptotic value $i\log N$ we achieves the bound on $T$ provided in \Cref{direct theorem1},
\[
 \frac{1 + \varepsilon}{1 - \delta} K \log N,
\]
and reliability is established.

\ifdouble
\begin{figure}
  \centering
  \includegraphics[trim=2.2cm 14.4cm 1.5cm 7.2cm,clip,scale=0.52]{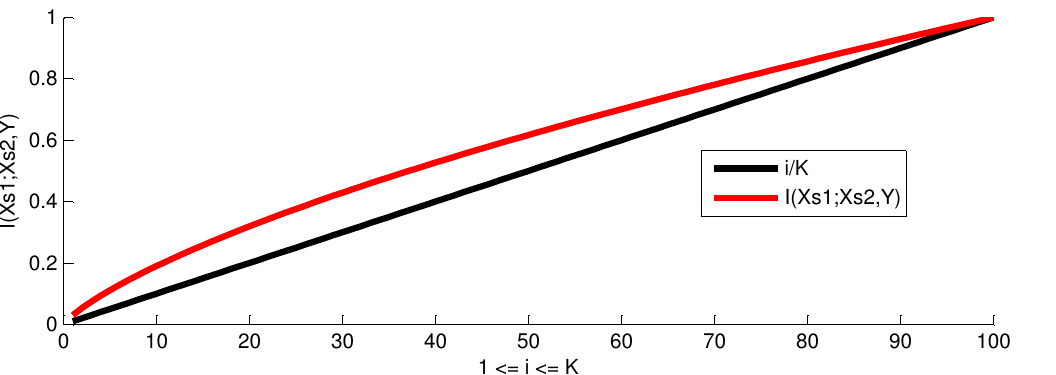}
  \caption{Mutual Information Bound \tifs{for $K=100$, and under a fixed input distribution for the testing matrix $(\frac{\ln(2)}{K},1-\frac{\ln(2)}{K})$}.}
  \label{figure:Mutual_Information_Bound}
  \vspace{-0.4cm}
\end{figure}
\else
\begin{figure}
  \centering
  \includegraphics[trim=2.2cm 14.4cm 1.5cm 7.2cm,clip,scale=0.8]{MATLAB.pdf}
  \caption{Mutual Information Bound}
  \label{figure:Mutual_Information_Bound}
\end{figure}
\fi

\subsection{Information Leakage at the Eavesdropper}\label{LowerBoundLeakage}
We now prove the security constraint is met. Hence, we wish to show that $\frac{1}{T}I(W;Z^{T})\rightarrow 0$, as $T\rightarrow \infty$. \tifs{This is done in two major steps.} Denote by $\mathcal{C}_T$ the random codebook and by $X_{\mathcal{S}}^T$ the set of codewords corresponding to the true, defective items set $\mathcal{S}$, given $W$ and the corresponding function $(W;R_K) \rightarrow X^{T}_{\mathcal{S}}$. \tifs{In the first step, we bound $\frac{1}{T}I(W;Z^{T})$ by $\frac{1}{T}H(R_K|Z^T,W,\mathcal{C}_{T})$, using a technical chain of inequalities. In the second step, however, we wish to show that this resulting expression actually depicts a key property of physical layer security: given the codebook, the message and Eve's observations, Eve is supposed to be able to estimate the randomness used, $R_K$. Hence, this conditional entropy is low, and as a result - the mutual information.}
\tifs{\subsubsection*{Step 1} We have,}
\begin{eqnarray}
&& \hspace{-0.8cm}\textstyle\textstyle \frac{1}{T}I(W;Z^T|\mathcal{C}_{T}) = \frac{1}{T}\left(I(W,R_K;Z^T|\mathcal{C}_{T})-I(R_K;Z^T|W,\mathcal{C}_{T})\right)\nonumber\\
&\hspace{-0.6cm}  \stackrel{(a)}{\leq} &\hspace{-0.5cm}\textstyle \frac{1}{T}\left(I(X_{\mathcal{S}}^T;Z^T|\mathcal{C}_{T})-I(R_K;Z^T|W,\mathcal{C}_{T})\right)\nonumber\\
&\hspace{-0.6cm} = &\hspace{-0.5cm}\textstyle \frac{1}{T}(I(X_{\mathcal{S}}^T;Z^T|\mathcal{C}_{T})-H(R_K|W,\mathcal{C}_{T})+H(R_K|Z^T,W,\mathcal{C}_{T}))\nonumber\\
&\hspace{-0.6cm} \stackrel{(b)}{=} & \hspace{-0.5cm}\textstyle \frac{1}{T}\left(I(X_{\mathcal{S}}^T;Z^T|\mathcal{C}_{T})-H(R_K)+H(R_K|Z^T,W,\mathcal{C}_{T})\right)\nonumber\\
&\hspace{-0.6cm} \stackrel{(c)}{\leq}&\hspace{-0.5cm} I(X_{\mathcal{S}};Z|\mathcal{C}_{T}) - \frac{1}{T}K\log M + \frac{1}{T}H(R_K|Z^T,W,\mathcal{C}_{T})\label{eq:leak_apply}\hspace{-0.2cm}\\
&\hspace{-0.6cm} \stackrel{(d)}{\leq}&\hspace{-0.5cm} \delta - \frac{1}{T}K\left(T \frac{\delta-\epsilon^{\prime}}{K}\right) + \frac{1}{T}H(R_K|Z^T,W,\mathcal{C}_{T})
\nonumber\\
&\hspace{-0.6cm} =&\hspace{-0.5cm} \epsilon^{\prime}+ \frac{1}{T}H(R_K|Z^T,W,\mathcal{C}_{T}).\nonumber
\end{eqnarray}
(a) is since $(W;R_K) \rightarrow X^{T}_{\mathcal{S}} \rightarrow Z^T$; note that given $\mathcal{C}_{T}$ there is a function correspondence between $(W,R_K)$ and $X_{\mathcal{S}}^T$, the mapping of the defective set and the internal randomness to the codewords used. It is important to note that since the codebook is generated randomly, the mapping is not exactly $1:1$ and there is a possibility for a repetition of codewords. However, averaged over all the possible sets of defective items $W$, the error probability from such a repetition is negligible, and this is a direct consequence using standard analysis of random coding given in \cite[Section 3.4]{C13}; (b) is since $R_K$ is independent of $W$ and $\mathcal{C}_T$; (c) is since the channel is memoryless, hence $I(X_{\mathcal{S}}^T;Z^T|\mathcal{C}_{T}) \leq T I(X_{\mathcal{S}};Z|\mathcal{C}_{T})$. This is a standard application of, e.g., \cite[Lemma 7.9.2]{cover2012elements}; (d) is since by choosing an i.i.d.\ distribution for the random codebook, when $H(R_K)=\log(M^K)$ and $R_K$ is chosen random and independent at the mixer, one easily observes that $I(X_{\mathcal{S}};Z| \mathcal{C}_T) \leq \delta$.
\tifs{\subsubsection*{Step 2} To bound $H(R_K|Z^T,W,\mathcal{C}_T)$, we wish to show that} \emph{given the set of defective items, the codebook, and the information Eve has}, she can infer the internal randomess used with a low error probability. This is, in a sense, similar to wiretap coding or secure network coding, where the message itself is concealed from Eve, using the internal randomness, \emph{yet given the message, Eve can easily guess the internal randomness used}. Now, for the specific case herein: The codebook is known to all. Any party knowing $W$, knows which were the defective items (from which bins the codewords were taken). We claim that this situation is \emph{analogous to a noiseless Boolean Multiple Access Channel, followed by an erasure channel}. To see this, note that there are $K$ defectives. Each of these (a ``user", in the analogy) puts a codeword on the channel. The codewords are summed. Then, Eve sees the sum through an erasure channel, since she sees only a fraction $\delta$. We wish to show that Eve can decode, i.e., know which codeword each defective item has put on the channel, hence get the internal randomness. The channel capacity is $\delta$. There are $K$ users. Thus, a rate of $\delta/K$ is attainable for each user. Since the block size is $T$, each user can have (a bit less than) $2^{T(\frac{\delta}{K})}$ codewords and Eve can still infer which codeword each user transmitted. \tifs{In other words, if $R_K$ can be estimated from $(Z^T,W,\mathcal{C}_T)$ with an error probability that goes to $0$ as $T \to \infty$, we have $H(R_K|Z^T,W,\mathcal{C}_T) \leq T\epsilon_T$ for some $\epsilon_T$ which satisfies $\epsilon_T \to 0$ as $T \to \infty$. This is a standard application of Fano's inequality \cite[Section 2.10]{cover2012elements}, with a specific Multiple Access Channel (MAC) application in \cite[Section 15.3.4]{cover2012elements}. Thus, $H(R_K|Z^T,W,\mathcal{C}_T) \leq T \epsilon_T$, where $\epsilon_T \to 0$ as $T \to \infty$}. Finally, all that is left to explain is that the codewords used for SGT are also good to attain the capacity for this MAC. The distribution used herein is i.i.d. with probabilities  $(\frac{\ln(2)}{K},1-\frac{\ln(2)}{K})$.\tifs{\footref{note10}} The MAC requires i.i.d. with probabilities $(1-2^{\frac{1}{K}}, 2^{\frac{1}{K}})$, but since the mutual information is continuous in the input distribution, this holds.

\begin{remark}
Under non-secure GT, it is clear that simply adding tests to a given GT code (increasing $T$) can only improve the performance of the code (in terms of reliability). A legitimate decoder can always disregard the added tests. For SGT, however, the situation is different. Simply adding tests to a given code, \emph{while fixing the bin sizes}, might make the vector of results vulnerable to eavesdropping. In order to increase reliability, one should, of course, increase $T$, but also increase the bin sizes proportionally, so the secrecy result above will still hold. This will be true for the efficient algorithm suggested in \Cref{efficient_algorithms} as well.
\end{remark}

\begin{remark}
Note that since
\begin{eqnarray*}
  \frac{1}{T}H(R_K|Z^T,W,\mathcal{C}_{T}) \hspace{-0.2cm} &=\hspace{-0.2cm}& \frac{1}{T}H(R_K)-\frac{1}{T}I(R_K;Z^T,W,\mathcal{C}_{T}) \\
   \hspace{-0.2cm}&=\hspace{-0.2cm}& \frac{1}{T}\tifs{K}\log M - \frac{1}{T}I(R_K;Z^T,W,\mathcal{C}_{T}),
\end{eqnarray*}
any finite-length approximation for $I(R_K;Z^T,W,\mathcal{C}_{T})$ will give a finite length approximation for the leakage at the eavesdropper.
For example, one can use the results in \cite{polyanskiy2010channel}, to show that the leakage can be approximated as $\frac{1}{\sqrt{T}}+\epsilon^{\prime}$.
\end{remark}
\section{Converse (Necessity)}\label{converse}
In this section, we derive the necessity bound on the required number of tests.
Let $\bar{Z}^T$ denote the random variable corresponding to the tests which are not available to the eavesdropper. Hence, $Y^T = (Z^T,\bar{Z}^T)$.
Similar to the non-secure GT setting, e.g., \cite[Section IV]{atia2012boolean} and \cite[Section IV]{chan2011non}, by Fano's inequality, if $P_e\rightarrow 0$ when $T\rightarrow\infty$, then
\begin{equation*}
H(W|Y^T) \leq  \log{\binom{N}{K}}\epsilon'_T,
\end{equation*}
for some $\epsilon'_T \rightarrow 0$ as $T\rightarrow\infty$. Moreover, the \tifs{weak secrecy constraint\footnote{\tifs{\Cref{whyW} elaborates why weak secrecy is considered in this section.}}, as proved in \Cref{LowerBoundLeakage}}, implies
\begin{equation}\label{eq:Necessity19}
I(W;Z^T)\leq T\epsilon''_T,
\end{equation}
where $\epsilon''_T\rightarrow 0$  as $T\rightarrow \infty$.  

Consequently,
\begin{eqnarray}\label{eq:R_s}
\log \binom{N}{K} &=& H(W)
\nonumber\\
&=& I(W;Y^T) + H(W|Y^T)
\nonumber\\
&\stackrel{(a)}{\leq}& I(W;Z^T,\bar{Z}^T) + \log{\binom{N}{K}}\epsilon'_T
\nonumber\\
&=& I(W;Z^T) + I(W;\bar{Z}^T|Z^T) + \log{\binom{N}{K}}\epsilon'_T
\nonumber\\
&\stackrel{(b)}{\leq} & I(W;\bar{Z}^T|Z^T) + \log{\binom{N}{K}}\epsilon'_T + T\epsilon''_T 
\nonumber\\
&=& H(\bar{Z}^T|Z^T) -
\nonumber\\
&&\qquad H(\bar{Z}^T|W,Z^T) + \log{\binom{N}{K}}\epsilon'_T + T\epsilon''_T
\nonumber\\
&\stackrel{(c)}{\leq} & H(\bar{Z}^T) + \log{\binom{N}{K}}\epsilon'_T + T\epsilon''_T
\nonumber
\end{eqnarray}
where (a) follows from Fano's inequality and since $Y^T = (Z^T,\bar{Z}^T)$, (b) follows from \eqref{eq:Necessity19}\tifs{, and (c) follows because conditioning reduces entropy. Note that the use of Fano in (a) is weaker than the bound in \cite[Theorem 3.1]{baldassini2013capacity}, which states that for non-secure GT, $P(\text{success}) \leq 2^T/\binom{N}{K}$. However, both predict the same phase transition from a high error probability to a low one, hence in the context of this converse both suffice.}

We now evaluate $H(\bar{Z}^T)$. Denote by $\mathcal{\bar{E}}$ the set of tests which are not available to Eve and by $\bar{E}_\gamma$ the event $\{|\mathcal{\bar{E}}| \leq T(1-\delta)(1+\gamma)\}$ for some $\gamma >0$. Let $\textbf{1}_{\bar{E}_\gamma}$ be the indicator for this event. We have
\begin{eqnarray*}
    H(\bar{Z}^T) &\leq&H(\bar{Z}^T, \textbf{1}_{\bar{E}_\gamma})
\\
    &\leq&H(\bar{Z}^T| \textbf{1}_{\bar{E}_\gamma})+1
\\
	&\stackrel{(a)}{=}& P(\bar{E}_\gamma)H(\bar{Z}^T| \bar{E}_\gamma) + P(\bar{E}^c_\gamma)H(\bar{Z}^T| \bar{E}^c_\gamma)+1
\\
	&\stackrel{(b)}{\leq}& T(1-\delta)(1+\gamma) + TP(\bar{E}^c_\gamma)+1
	\\
	& \stackrel{(c)}{\leq}& T(1-\delta)(1+\gamma) + T2^{-T(1-\delta)f(\gamma)}+1,
\end{eqnarray*}
where (a) follows from the law of total expectation, applied to the entropy of $\bar{Z}^T$ given the indicator. (b) follows since in the first summand, $P(\bar{E}_\gamma)$ can surely be bounded by 1, while given $\bar{E}_\gamma$, the number of tests which are not available to Eve is at most $T(1-\delta)(1+\gamma)$, hence their entropy is at most that number (as each test result is binary); for the second summand, the entropy of $\bar{Z}^T$ (even without conditioning) is clearly bounded by $T$, as it is a binary vector of length at most $T$. (c) follows from the Chernoff bound for i.i.d.\ Bernoulli random variables with parameter $(1-\delta)$ and is true for some $f(\gamma)$ such that $f(\gamma)>0$ for any $\gamma > 0$.

Thus, we have
\ifdouble
\begin{multline*}
    \log \binom{N}{K} \leq T(1-\delta)(1+\gamma) \\+ T2^{-T(1-\delta)f(\gamma)} +
\log{\binom{N}{K}}\epsilon'_T + T\epsilon''_T+1.
\end{multline*}
\else
\[
\log \binom{N}{K} \leq T(1-\delta)(1+\gamma) + T2^{-T(1-\delta)f(\gamma)} +\log{\binom{N}{K}}\epsilon'_T + T\epsilon''_T+1.
\]
\fi
That is,
\[
	T \ge \frac{1-\epsilon_T}{1-\delta}\log \binom{N}{K},
	\]
for some $\epsilon_T$ such that $\epsilon_T \to 0$ as $T \to \infty$. This completes the converse proof.
 %
\section{Noisy observation at the eavesdropper}\label{noisy_eve}
To simplify the technical aspects and allow us to focus on the key methods and results, during this paper we consider an noiseless channel at the legitimate receiver and erasure channel at the eavesdropper.
Yet, the outcome vector $Y^T$ and the noisy vector $Z^T$, may be generated from a channels with other noise models.

The case of noisy vector $Y^T$ at the decoder in the non-secure problem was considered in the literature for many models of noises (e.g., \cite{atia2012boolean}). Such that, the bound on $T$ in the noisy case is given by bounding the mutual information between the mixer and the decoder. We assume that using similar technics as in the non-secure problem to bound the rates of the mutual information at the legitimate decoder in the secure problem we consider herein the bound on $T$ is simply given as well.

In this section, \tifs{under the weak security guarantee in \Cref{LowerBoundLeakage}}, we will generalize the result given in \Cref{LowerBound} to address the case where the $Z^T$ is generated from any possible model of noise, e.g., false positive errors, false negative errors, both possible errors or a Binary Symmetric Channel (BSC), as considered in \cite{atia2012boolean}, \cite{sejdinovic2010note} and \cite{chan2011non}, respectively. The information obtained at the eavesdropper from the noisy observation $Z$, is $I(X_{\mathcal{S}};Z)$. The codebook, testing process and the decoding algorithm at the legitimate decoder are as given \Cref{LowerBound}, yet, we chose $M$, the number of rows in each bin, such that $\log_2(M) = T(I(X_{\mathcal{S}};Z)-\epsilon_K)/K$. 

Substituting $M=2^{T\frac{I(X_{\mathcal{S}};Z)-\epsilon_K}{K}}$ to the inequality given in \eqref{eq:bound_apply}, a sufficient condition for reliability is
\begin{eqnarray*}
T &\ge& \max_{1 \leq i \leq K}\frac{1+\varepsilon}{\frac{i}{K}}\left[ \log\binom{N-K}{i} + \frac{i}{K}T(I(X_{\mathcal{S}};Z)-\epsilon_K) \right].
\end{eqnarray*}
Note that this step is valid since the observations at the legitimate side are noiseless. Rearranging terms results in
\begin{eqnarray}
T & \ge & \max_{1 \leq i \leq K} \frac{(1+\varepsilon)\log\binom{N-K}{i}}{(1-I(X_{\mathcal{S}};Z) +\epsilon_K-\varepsilon I(X_{\mathcal{S}};Z)+\varepsilon\epsilon_K)i/K},\nonumber
\end{eqnarray}
where by reducing $\epsilon_K$ and $\varepsilon\epsilon_K$ we increase the bound on $T$, and with some constant $\varepsilon > 0$.
Noting that this is for large $K$ and $N$, and that $\varepsilon$ is independent of them, achieves the bound on $T$ for any model of noise given at the outcome of the eavesdropper and reliability is established.

We now briefly show how the weak security guarantee, proved in \tifs{\Cref{LowerBoundLeakage}} for the noiseless setting, can be applied here as well. In fact, substituting $M=2^{T\frac{I(X_{\mathcal{S}};Z)-\epsilon_K}{K}}$  to \eqref{eq:leak_apply},
\ifdouble
\begin{multline}
  I(X_{\mathcal{S}};Z|\mathcal{C}_{T}) - \frac{1}{T}K\left(T \frac{I(X_{\mathcal{S}};Z)-\epsilon_K}{K}\right)\\ + \frac{1}{T}H(R_K|Z^T,W,\mathcal{C}_{T}) \stackrel{(a)}{\leq}  \epsilon_T + \epsilon_K,
\end{multline}
\else
\begin{equation}
I(X_{\mathcal{S}};Z|\mathcal{C}_{T}) - \frac{1}{T}K\left(T \frac{I(X_{\mathcal{S}};Z)-\epsilon_K}{K}\right) + \frac{1}{T}H(R_K|Z^T,W,\mathcal{C}_{T}) \stackrel{(a)}{\leq}  \epsilon_T + \epsilon_K,
\end{equation}
\fi
where (a) is given from the proof in \tifs{\Cref{LowerBoundLeakage}}.

For example, to illustrate the case considered in this section we will analyze the possible false positive errors noise at the eavesdropper. Such that, where $U(t)\sim Bernoulli(u)$, the eavesdropper observes
\begin{equation*}
Z(t)=Y(t)\vee U(t)= \left(\bigvee_{j=1}^{N}X_{j}(t)A_j\right)\vee U(t),\text{ }\text{ } t=1,\ldots,T.
\end{equation*}
For possible false positive errors, large $K$, and under a fixed input distribution for the testing matrix $(\frac{\ln(2)}{K},1-\frac{\ln(2)}{K})$, the mutual information between $X_{\mathcal{S}}$ and $Z$ is bounded by
\begin{eqnarray*}
  I(X_{\mathcal{S}};Z) &=& H(Z) - H(Z|X_{\mathcal{S}})\\
  & = & H((1-p)^{K}(1-u))- (1-p)^{K} H(u)\\
  & = & H(0.5(1-u))- 0.5 H(u).
\end{eqnarray*}
Substituting this mutual information to the bound on $T$ given in this section for any noise we have,
\begin{equation*}
 T  \geq  \max_{i=1,\ldots ,K} \frac{(1+\varepsilon)\log\binom{N-K}{i}}{1-(H(0.5(1-u))- 0.5 H(u))} \frac{K}{i}.
\end{equation*}
By replacing $\log\binom{N-K}{i}$ by it's asymptotic value $i\log N$,
\[
  T  \geq  \frac{(1+\varepsilon)}{1-(H(0.5(1-u))- 0.5 H(u))} K \log N.
\]
\\That is, for some $\varepsilon > 0$ and an eavesdropper with false positive possible errors, as $N\rightarrow \infty$, there exists a sequence of SGT algorithms which are reliable and secure.

To conclude, it is not hard to see that by setting $M$ according to the mutual information between $X_{\mathcal{S}}$ to $Z$ in any model of noise at the eavesdropper we can bound the number of tests to get both, reliability and secrecy constraints.      %
\section{Efficient Algorithms}\label{efficient_algorithms}
The achievability result given in \Cref{direct theorem1} uses a random codebook and ML decoding at the legitimate party. The complexity burden in ML, however, prohibits the use of this result for large $N$. In this section, we derive and analyze an efficient decoding algorithm, which maintains the reliability result using a much simpler decoding rule, at the price of \emph{only slightly more tests}. 

\tifs{The codebook construction and the mixing process do not change compared to the achievability result given before. Particularly, note that the information leakage proof given in \Cref{LowerBoundLeakage} is independent of the decoding algorithm, and valid for all $T$. Essentially, with the correct value of $M$ (which, of course, depends on the other parameters of the problem), whatever Eve does, her normalized mutual information is small. Thus, as long as the construction is kept and $M$ is fixed the same way, the security proof holds.} Moreover, the result in this section will hold for any $K$, including even the case were $K$ grows linearly with $N$.

Specifically, we assume the same codebook generation and the testing procedure given in \Cref{LowerBound}, \tifs{and analyze the \emph{Definitely Non-Defective} (DND) algorithm, previously considered in the literature for the non-secure case \cite{kautz1964nonrandom,chen2008survey,chan2011non,chan2014non,aldridge2014group,lo2013efficient,malyutov2013search,aldridge2019group}.}
The decoding algorithm at the legitimate user is as follows. Bob attempts to match the rows of \textbf{X} with the outcome vector $Y^T$. If a particular row $j$ of \textbf{X} has the property that all locations $t$ where it has $1$, also corresponds to a $1$ in $Y(t)$, then that row \emph{can correspond to a defective item}. If, however, the row has $1$ at a location $t$ where the output has $0$, then it is not possible that the row corresponds to a defective item. The problem, however, when considering the code construction in this paper for SGT, is that the decoder does not know which row from each bin was selected for any given item. Thus, it takes a conservative approach, and declares an item as defective if at least one of the rows in its bin signals it may be so. An item is not defective only if \emph{all the rows in its bin prevent it from being so}.

It is clear that this decoding procedure has no false negatives, as a defective item will always be detected. It may have, though, false positives. A false positive may occur if all locations with ones in a row corresponding to a non-defective item are hidden by the ones of other rows corresponding to defective items and selected by the mixer. To calculate the error probability, fix a row of $\textbf{X}$ corresponding to a non-defective item (a row in its bin). Let $j_1;\ldots;j_k$ index the rows of $\textbf{X}$ corresponding to the $K$ defective items, and selected by the mixer for these items (that is, the rows which were actually added by the Boolean channel).
An error event associated with the fixed row occurs if at any test where that row has a $1$, at least one of the entries $X_{j_1}(t), \ldots, X_{j_k}(t)$ also has a $1$. The probability for this to happen, per column, is $p(1-(1-p)^K)$. Hence, the probability that a test result in a fixed row is hidden from the decoder, in the sense that it cannot be declared as non defective due to a specific column, is
\[
p(1-(1-p)^K)+(1-p) = 1-p(1-p)^K.
\]
Since this should happen for all $T$ columns, the error probability for a fixed row is $\left(1-p(1-p)^K\right)^T$. This error probability, for a fixed row, is similar to the non-secure case given in  \cite{chan2014non,aldridge2014group}. However, to compute the error probability for the entire procedure in the secure setting, we must take a union bound over all $M(N-K)$ rows in the codebook, unlike a union bound on the $N-K$ unique rows corresponding to non-defective items. Moreover, to provide bounds on the error probability and hence $T$, in the secure setting, we utilize a specific input distribution as given in \Cref{LowerBound} and the number of codewords $M$ in each bin. This, according to the following analysis for SGT, leads to an additional factor of $1/2-\delta$, which is critical and shows a dependence on Eve's fraction of leaked test results. Thus, we have
\ifdouble
\begin{figure}
  \centering
  \includegraphics[trim= 1.6cm 0.0cm 0cm 0.5cm,clip,scale=0.228]{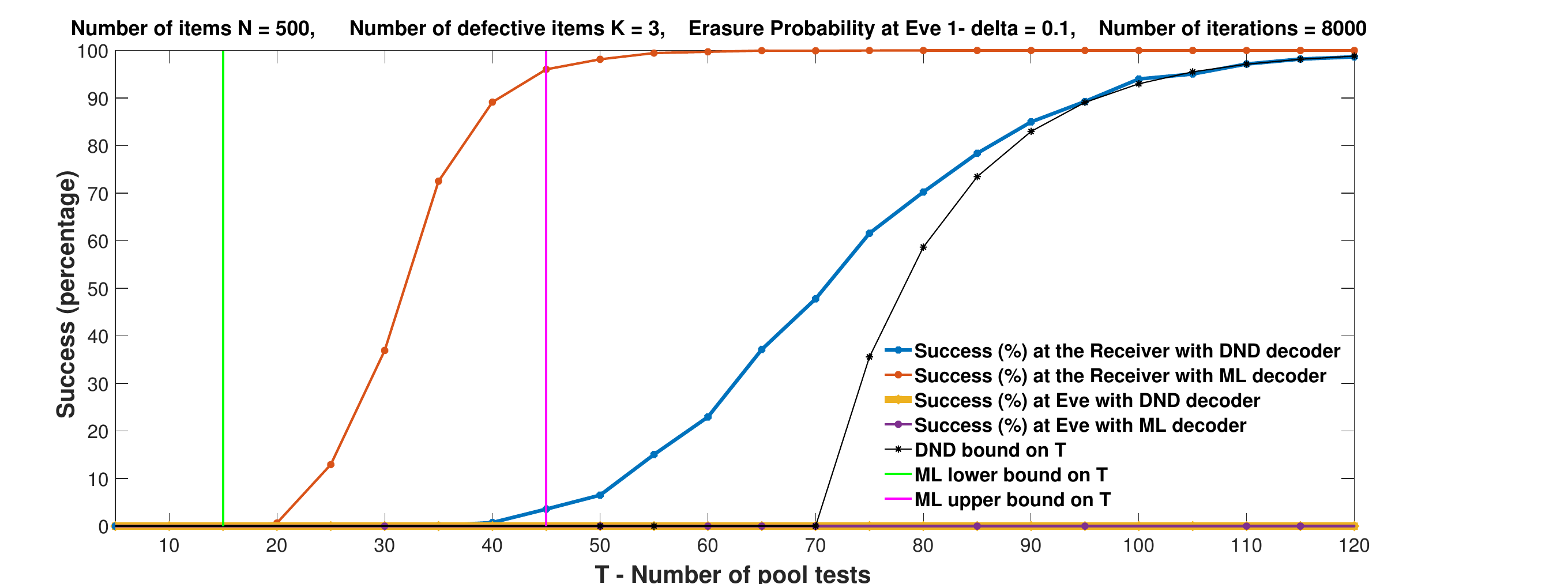}
  \caption{\emph{Definitely Non-Defective} and ML simulation results.}
  \label{fig:DND_simulation}
  \vspace{-0.4cm}
\end{figure}
\else
\begin{figure}
  \centering
  \includegraphics[trim= 0cm 0.0cm 0cm 0.5cm,clip,scale=0.4]{Success_Prob_of_all_the_DND_ML_items_N_500_K_3_Delta_01_NumIterations_8000_date_2017_3_11_6_54_v3.eps}
  \caption{\emph{Definite Non-Defective} and ML simulation results.}
  \label{fig:DND_simulation}
\end{figure}
\fi
\ifdouble
\begin{eqnarray}\label{eq:PeComp}
&& \hspace{-0.9cm} P_e \leq M(N-K)\left(1-p(1-p)^{K}\right)^T
\nonumber\\
&& \hspace{-0.5cm} \stackrel{(a)}{\leq} MN\left(1-\frac{\tifs{\ln2}}{K}\left(1-\frac{\tifs{\ln2}}{K}\right)^{K}\right)^{\beta K\log N}
\nonumber\\
&&\hspace{-0.5cm} = MN\left(1-\frac{\tifs{\ln2}}{K}\left(1-\frac{\tifs{\ln2}}{K}\right)^{K-1}\left(1-\frac{\tifs{\ln2}}{K}\right)\right)^{\beta K\log N}
\nonumber\\
&& \hspace{-0.5cm} \stackrel{(b)}{\leq} MN\left(\left(1-\frac{\tifs{\ln2}\left(1-\frac{\tifs{\ln2}}{K}\right)}{Ke^{\tifs{\ln2}}}\right)^{K}\right)^{\beta\log N}
\nonumber\\
&& \hspace{-0.5cm} \stackrel{(c)}{\leq} MNe^{-\tifs{\ln2}\left(1-\frac{\tifs{\ln2}}{K}\right)e^{-\tifs{\ln2}}\beta\log N}\nonumber\\
&&\hspace{-0.5cm} \leq MN^{1-\tifs{\ln2}\left(1-\frac{\tifs{\ln2}}{K}\right)e^{-\tifs{\ln2}}\beta\frac{1}{\ln 2}}
\nonumber\\
&& \hspace{-0.5cm} = MN^{1-\frac{1}{2}\beta (1-\frac{\ln2}{K})}
\nonumber\\
&&\hspace{-0.5cm} \stackrel{(d)}{=} 2^{\beta (\delta-\epsilon) \log N}N^{1-\frac{1}{2}\beta (1-\frac{\ln2}{K})}
\nonumber\\
&&\hspace{-0.5cm} = N^{\beta (\delta-\epsilon)}N^{1-\frac{1}{2}\beta (1-\frac{\ln2}{K})}
\nonumber\\
&&\hspace{-0.5cm} \leq N^{1-\beta\left(\frac{1}{2}(1-\frac{\ln 2}{K})-\delta \right)}.
\end{eqnarray}
\else
\begin{eqnarray}\label{eq:PeComp}
  P_e &\leq& M(N-K)\left(1-p(1-p)^{K}\right)^T
\nonumber\\
& \stackrel{(a)}{\leq}& MN\left(1-\frac{\tifs{\ln2}}{K}\left(1-\frac{\tifs{\ln2}}{K}\right)^{K}\right)^{\beta K\log N}
\nonumber\\
&=& MN\left(1-\frac{\tifs{\ln2}}{K}\left(1-\frac{\tifs{\ln2}}{K}\right)^{K-1}\left(1-\frac{\tifs{\ln2}}{K}\right)\right)^{\beta K\log N}
\nonumber\\
& \stackrel{(b)}{\leq}& MN\left(\left(1-\frac{\tifs{\ln2}\left(1-\frac{\tifs{\ln2}}{K}\right)}{Ke^{\tifs{\ln2}}}\right)^{K}\right)^{\beta\log N}
\nonumber\\
& \stackrel{(c)}{\leq}& MNe^{-\tifs{\ln2}\left(1-\frac{\tifs{\ln2}}{K}\right)e^{-\tifs{\ln2}}\beta\log N}\nonumber\\
&\leq& MN^{1-\tifs{\ln2}\left(1-\frac{\tifs{\ln2}}{K}\right)e^{-\tifs{\ln2}}\beta\frac{1}{\ln 2}}
\nonumber\\
& = & MN^{1-\frac{1}{2}\beta (1-\frac{\ln2}{K})}
\nonumber
\end{eqnarray}
\begin{eqnarray}
&\hspace{-3.4cm} \stackrel{(d)}{=}& 2^{\beta (\delta-\epsilon) \log N}N^{1-\frac{1}{2}\beta (1-\frac{\ln2}{K})}
\nonumber\\
&\hspace{-3.4cm}=&N^{\beta (\delta-\epsilon)}N^{1-\frac{1}{2}\beta (1-\frac{\ln2}{K})}
\nonumber\\
&\hspace{-3.4cm}\leq&N^{1-\beta\left(\frac{1}{2}(1-\frac{\ln 2}{K})-\delta \right)}.
\end{eqnarray}
\fi
In the above, (a) follows by taking $p=\tifs{\ln2}/K$ and setting $T$ as $\beta K\log N$, for some positive $\beta$, to be defined. (b) follows since $e^{-\tifs{\ln2}} \leq (1-\tifs{\ln2}/n)^{n-1}$ for any integer $n >0$. (c) follows since $e^{-x} \ge (1-x/n)^{n}$ for $x>0$ and any integer $n >0$. (d) is by setting $M=2^{T\frac{\delta-\epsilon}{K}}$ and substituting the value for $T$.

The result in \eqref{eq:PeComp} can be interpreted as follows. As long as $\delta$, the leakage probability at the eavesdropper, is smaller than $\frac{1}{2}(1-\frac{\ln 2}{K})$, choosing $T=\beta K \log N$ with a large enough $\beta$ results in an exponentially small error probability. For example, for large enough $K$ and $\delta=0.25$, one needs $\beta > 4$, that is, about $4K\log N$ tests to have an exponentially small (with $N$) error probability while using an efficient decoding algorithm. To see the dependence of the error probability on the number of tests, denote
\[
\epsilon = \beta\left(\frac{1}{2}\left(1-\frac{\ln 2}{K}\right)-\delta\right) - 1.
\]
Then, if the number of tests satisfies
\[
T \ge \frac{1+\epsilon}{\frac{1}{2}(1-\frac{\ln 2}{K})-\delta} K \log N
\]
one has
\[
P_e \leq N^{-\epsilon}.
\]
Thus, while the results under ML decoding (\Cref{direct theorem1}) show that any value of $\delta <1$ is possible (with a $\frac{1}{1-\delta}$ toll on $T$ compared to non-secure GT), the analysis herein suggests that using the efficient algorithm, one can have a small error probability only for $\delta < 1/2$, and the toll on $T$ is greater than $\frac{1}{\frac{1}{2}-\delta}$. \tifs{This is consistent with the fact that this algorithm is known to achieve only slightly more than half of the capacity for non-secure GT \cite{aldridge2014group}. Both these results may be due to an inherent deficiency in the algorithm.}
\begin{remark}[Complexity]
It is easy to see that the algorithm runs over all rows in the codebook, and compares each one to the vector of tests' results. The length of each row is $T$. There are $N$ items, each having about $2^{\frac{\delta}{K}T}$ rows in its bin. Since $T=O(K \log N)$, we have $O(N^2)$ rows in total. Thus, the number of operations is $O(N^2T)=O(KN^2 \log N)$. This should be compared to $O(KN\log N)$ without any secrecy constraint.
\end{remark}

\Cref{fig:DND_simulation} includes simulation results of the secure DND GT algorithm proposed, compared with ML decoding and the upper and lower bounds on the performance of ML.

 %
\section{Information Leakage at the Eavesdropper (Strong Secrecy)}\label{strong_secrecy}
\tifs{In this section, for any $K$ we wish to prove the strong security constraint is met, that is, $I(W;Z^{T})\rightarrow 0$. Denote by $\mathcal{C}_T$ the random codebook and by $\textbf{X}_{\mathcal{S}_w}^T$ the set of codewords corresponding to the true defective items. We assumed $W \in \{1,\ldots, {N \choose K}\}$ is uniformly distributed, that is, there is no \emph{a-priori} bias to any specific subset.
Further, the codebook includes independent and identically distributed codewords. The eavesdropper, observing $Z^T$, wishes to identify the identity of the $K$ defective items, one subset out of ${N \choose K}$ subsets.}

\tifs{To analyze the information leakage at the eavesdropper, we view the channel to Eve as a \emph{Multiple Access Channel, followed by a Binary Erasure Channel}. Specifically, each of the defective items can be considered as a ``user" with $M$ specific codewords. This is since each defective item ``transmits" a codeword, one out of $M$ (only these affect the output). Eve's goal, in SGT, is to identify the active users, the ones who actually transmitted. The channel to Eve is thus a binary Boolean MAC (summing the $K$ transmitted words), followed by a BEC$(1-\delta)$. \emph{The sum capacity to Eve is hence $\delta$}. By the channel coding theorem, this is a strict bound on the number of information bits Eve can get reliably. In fact, since the whole codebook is Bernoulli i.i.d.\ with an input distribution  $(\frac{\ln(2)}{K},1-\frac{\ln(2)}{K})$ for the testing matrix and uniform distribution on the items, we operate at \emph{an equal rate from each defective item}, hence Eve can obtain from each active user a rate of at most $\delta/K$. We will assume Eve indeed gets this amount of information from each item (user), i.e., we will view this as Eve obtaining a sub-matrix $\tilde{\textbf{Z}}$ of possibly transmitted codewords. On the one hand, it is a sub-matrix, since from each codeword (row) Eve achieves at most a fraction $\delta/K$. Yet, this only helps Eve, as we assume Eve's channel \emph{does not sum the codewords at the positions which Eve sees}. Thus, Eve will not be confused by the existence of other users. This way, we can analyze \emph{an upper bound on the information that leaks to Eve on the identity of the subset of defective items}\footnote{\tifs{Providing this information to Eve only makes her stronger and thus a coding scheme that keeps Eve ignorant, will also succeed against the original Eve.}}.}

\tifs{Thus, per codeword $X_j^T$, an upper bound on Eve's information is by assuming a fraction of at most $\frac{\delta}{K}$ of the $T$ entries are directly accessible to Eve (as if a genie revealed the summands for her).}
\tifs{Consider now Eve's side, observing a partial row of length $\frac{\delta}{K}T$. Eve wishes to find \emph{which rows in the original codebook are consistent with her observation}. If all rows consistent are within a bin of a certain item, Eve can identify this item as defective. However, we show that Eve's true status is much worse. In each bin, Eve will find many consistent rows, and, moreover, the number of consistent rows will be relatively similar for all bins (in a notion made precise below). To this end, remember that the input distribution is $(\frac{\ln(2)}{K},1-\frac{\ln(2)}{K})$, i.i.d.\ on all rows. Hence, per bit, the probability that Eve's observation is similar to a given row, which was also i.i.d.\ with the same distribution, is}
\begin{equation*}
    \tifs{\left(\frac{\ln(2)}{K}\right)^2+\left(1-\frac{\ln(2)}{K}\right)^2= 1 - \frac{2\ln(2)}{K} + 2\left(\frac{\ln(2)}{K}\right)^2.}
\end{equation*}
\tifs{Since Eve's observation is of length $\frac{\delta}{K}T$, the probability of a codeword being consistent with her observation is}
\begin{equation*}
    \tifs{\left(1 - \frac{2\ln(2)}{K} + 2\left(\frac{\ln(2)}{K}\right)^2\right)^{\frac{\delta}{K}T}.}
\end{equation*}
\tifs{As for each item we have $M =2^{\left(\frac{\delta+\epsilon^{\prime}}{K}\right)T}$ codewords, \emph{on average}, the  number of candidate codewords Eve has per item is}
\begin{multline*}
\tifs{\left(1 - \frac{2\ln(2)}{K} + 2\left(\frac{\ln(2)}{K}\right)^2\right)^{\frac{\delta}{K}T}2^{\left(\frac{\delta+\epsilon^{\prime}}{K}\right)T}}\\
\tifs{= 2^{\frac{T}{K}\left(\delta\Big( 1 + \log \big( 1 - \frac{2\ln(2)}{K} + 2\left(\frac{\ln(2)}{K}\right)^2\big)\Big)+\epsilon^{\prime}\right)}.}
\end{multline*}
\tifs{Denote this number by $2^{\frac{T}{K}\tilde{E}}$. Note that for any $K\geq 1$, 
\[
1 + \log \left( 1 - \frac{2\ln(2)}{K} + 2\left(\frac{\ln(2)}{K}\right)^2\right) \geq 0,
\]
hence $\tilde{E}\geq\epsilon^{\prime}$.}

\tifs{We now wish to show that not only Eve has (exponentially) many candidates per item, the probability that the number deviates from the average above is very small. Denote the set of candidates by $\mathcal{S}h(\tilde{Z}_{\tilde{j}}^{T})$, and further define the event}
\begin{equation*}
 \tifs{\mathcal{E}_{C_1}(\tilde{Z}_{\tilde{j}}^{T}):= \{ (1-\gamma)2^{\frac{T}{K}\tilde{E}}\leq
  |\mathcal{S}h(\tilde{Z}_{\tilde{j}}^{T})| \leq (1+\gamma)2^{\frac{T}{K}\tilde{E}}\}.}
\end{equation*}
\tifs{By the Chernoff bound}
\begin{equation}\label{eq. Chernoff}
  \tifs{P(\mathcal{E}_{C_1}(\tilde{Z}_{\tilde{j}}^{T})) \geq 1- 2^{-\frac{\gamma^2}{2}2^{\frac{T}{K}\tilde{E}}}.}
\end{equation}
\tifs{Note that \eqref{eq. Chernoff} applies for any $0 \leq \gamma \leq 1$, regardless of $T$. Yet, it is clear that for the result to be meaningful, $\gamma$ should be small enough on the one hand, yet not too small, so the exponent in \eqref{eq. Chernoff} is bounded away from zero. We will characterize $\gamma$ in the sequel.}

\tifs{We are now able to show that the mutual information is indeed negligible. Denote by $\textbf{1}_{\mathcal{E}_{C_1}}$ the indicator for the event where the actual number of rows \tifs{does not deviate} from the average. Therefore, under $\textbf{1}_{\mathcal{E}_{C_1}}$, Eve has between $(1-\gamma)2^{\frac{T}{K}\tilde{E}}$ and $ (1+\gamma)2^{\frac{T}{K}\tilde{E}}$ candidates per item, and thus:}
\tifs{\begin{align}\label{leakageS}
& I(W;Z^{T}) \nonumber\\
& = H(W)-H(W|Z^{T}) \nonumber\\
&\leq H(W)-H(W|Z^{T},\textbf{1}_{\mathcal{E}_{C_1}})\nonumber\\
&= H(W)-[P(\mathcal{E}_{C_1}^c)H(W|Z^{T},\textbf{1}_{\mathcal{E}_{C_1}} = 0)\nonumber\\
&\hspace{2.5cm} +P(\mathcal{E}_{C_1})H(W|Z^{T},\textbf{1}_{\mathcal{E}_{C_1}} = 1)]\nonumber\\
&\leq H(W)-P(\mathcal{E}_{C_1})H(W|Z^{T},\textbf{1}_{\mathcal{E}_{C_1}} = 1)\nonumber\\
&\leq \log \binom{N}{K}-\left( 1- 2^{-\frac{\gamma^2}{2}2^{\frac{T}{K}\tilde{E}}} \right)H(W|Z^{T},\textbf{1}_{\mathcal{E}_{C_1}}=1).
\end{align}}
\tifs{Finally, we consider $H(W|Z^{T},\textbf{1}_{\mathcal{E}_{C_1}}=1)$. Conditioned on $\textbf{1}_{\mathcal{E}_{C_1}}=1$, Eve has between $(1-\gamma)2^{\frac{T}{K}\tilde{E}}$ and $(1+\gamma)2^{\frac{T}{K}\tilde{E}}$ rows for each item. Hence, the mass she assigns each subset of possibly defective items is in the range $(1 \pm \gamma)K2^{\frac{T}{K}\tilde{E}}$, with a total mass of $\binom{N}{K}K2^{\frac{T}{K}\tilde{E}}$ (this is the total number of candidates, rows, she has for $K$ items). This results in assigning each subset a probability in the range $\frac{1}{\binom{N}{K}} \pm \frac{\gamma}{\binom{N}{K}}$. Computing the conditional entropy using this probability, and applying the Taylor series $-\log\left(\frac{1}{L}+\frac{\gamma}{L}\right) = -\log\left(\frac{1}{L}\right)-\gamma+O(\gamma^2)$, we have
\begin{equation*}
H(W|Z^{T},\textbf{1}_{\mathcal{E}_{C_1}}=1) = \log \binom{N}{K} - O\left(\gamma \log \binom{N}{K}\right).
\end{equation*}
Since $T \ge \log \binom{N}{K}$, setting $\gamma = \frac{1}{T^2}$ results in both making the $\gamma \log \binom{N}{K}$ term negligible with large $T$, and $2^{-\frac{\gamma^2}{2}2^{\frac{T}{K}\tilde{E}}} \to 0$ as $T \to \infty$, hence $I(W;Z^{T}) \to 0$ as $T \to \infty$, which completes the proof.
}

 %
\section{Scaling Regime of $K=o(N)$}\label{Kgrow}
Here, we point out that the result in \Cref{direct lemma1} can be extended to the more general case where both $N$ and $K$ are allowed to scale simultaneously such that $K=o(N)$. To attain order-optimal results as given \Cref{theorem_K=o(N)}, the scaling regime of $K$ is $O(\log(N))$. However, in the regime of $K=\omega(\log(N))$, reliability and secrecy in the SGT model can still be achieved, yet the result is no longer tight. E.g., when $K = N^{1/4}$, the reliability and secrecy constraints are met for $T=O\left(\frac{\sqrt N\log(N)}{1-\delta}\right)$. \tifs{Remembering the definition in \Cref{relia}, where $(\mathcal{S}^{1},\mathcal{S}^{2})$ denote a partition of the defective set $S$ into disjoint sets $\mathcal{S}^{1}$ and $\mathcal{S}^{2}$, and the codebook construction, $X_{\mathcal{S}^1}$ and $X_{\mathcal{S}^2}$ are independent, hence $P(X_{\mathcal{S}^2}) = P(X_{\mathcal{S}^2}| X_{\mathcal{S}^1})$. Hence, using the} analysis in the proof of \Cref{error lemma2} and similar techniques to \cite{aksoylar2014information,aksoylar2017sparse}, the error exponent in \Cref{error lemma2} can be written as
\ifdouble
\begin{multline}
E_{o}(\rho)= -\log \sum_{Y\in \{0,1\}}\sum_{X_{\mathcal{S}^2}} P(X_{\mathcal{S}^2})\\
\Bigg[\sum_{X_{\mathcal{S}^1}}P(X_{\mathcal{S}^1})P(Y|X_{\mathcal{S}^1},X_{\mathcal{S}^2})^{\frac{1}{1+\rho}}\Bigg]^{1+\rho},\text{ } 0 \leq\rho \leq1.\nonumber
\end{multline}
\else
\begin{equation*}
E_{o}(\rho)= -\log \sum_{Y\in \{0,1\}}\sum_{X_{\mathcal{S}^2}} P(X_{\mathcal{S}^2})
\Bigg[\sum_{X_{\mathcal{S}^1}}P(X_{\mathcal{S}^1})P(Y|X_{\mathcal{S}^1},X_{\mathcal{S}^2})^{\frac{1}{1+\rho}}\Bigg]^{1+\rho},\text{ } 0 \leq\rho \leq1.
\end{equation*}
\fi
The outcome $Y$ at the noiseless legitimate decoder depends on the Bernoulli random variables $\tilde{Y}_1=\bigvee_{d \in \mathcal{S}^1}X_{d}$ and $\tilde{Y}_2=\bigvee_{d \in \mathcal{S}^2}X_{d}$, with probability $p_1=1-(1-\ln(2)/K)^i$ and $p_2=1-(1-\ln(2)/K)^{K-i}$, respectively. Hence, substituting $\tilde{Y}_1,\tilde{Y}_2$ in the error exponent, where \tifs{$Q_i(\cdot)$ denotes the probability distribution, according to $p_i$, $i \in \{1,2\}$}, $P(Y|\tilde{Y}_1,\tilde{Y}_2) = 1$ if $Y = \tilde{Y}_1 \vee \tilde{Y}_2$  and $0$ otherwise, and since this is a binary function, we have
\ifdouble
\begin{multline}
E_{o}(\rho)= -\log \sum_{Y\in \{0,1\}}\sum_{\tilde{Y}_2 \in \{0,1\}} Q\tifs{_2}(\tilde{Y}_2)\\
\Bigg[\sum_{\tilde{Y}_1\in \{0,1\}}Q\tifs{_1}(\tilde{Y}_1)\textbf{1}\{Y = \tilde{Y}_1 \vee \tilde{Y}_2\}\Bigg]^{1+\rho},\text{ } 0 \leq\rho \leq1.\nonumber
\end{multline}
\else
\begin{equation*}
E_{o}(\rho)= -\log \sum_{Y\in \{0,1\}}\sum_{\tilde{Y}_2\in \{0,1\}} Q(\tilde{Y}_2) \Bigg[\sum_{\tilde{Y}_1\in \{0,1\}}Q(\tilde{Y}_1)
\textbf{1}\{Y = \tilde{Y}_1 \vee \tilde{Y}_2\}\Bigg]^{1+\rho},\text{ } 0 \leq\rho \leq1.
\end{equation*}
\fi
We now compute the realization of the error exponent for the two possible cases. When $Y=0$, we have $(1-p_2)(1-p_1)^{1+\rho}$, and when $Y=1$, we have $p_2+(1-p_2)p_1^{1+\rho}$, such that
\[
E_{o}(\rho)= -\log \Bigg(p_2+(1-p_2)\Bigg[p_{1}^{1+\rho}+(1-p_1)^{1+\rho}\Bigg]\Bigg).
\]
\tifs{Using the Taylor series for $e^{-\frac{\ln(2)}{K}}$ we have $p_1 = 1-e^{-i\ln(2)/K} + O\left(\frac{1}{K}\right)$, while $p_2 = 1-e^{-(K-i)\ln(2)/K} + O\left(\frac{1}{K}\right)$}. By choosing $\rho = 1$ and denoting $\alpha = i\ln(2)/K$, \tifs{for large enough $K$} we have
\ifdouble
\begin{multline*}
\hspace{-0.4cm} E_{o}(1) \tifs{=} -\log \Bigg(1-e^{-(\ln(2)-\alpha)} 
\\
\hspace{-0.5cm} + e^{-(\ln(2)-\alpha)}(1-e^{-\alpha})^2 +e^{-(\tifs{\ln(2)}+\alpha)} \tifs{+O\left(\frac{1}{K}\right)}\Bigg) 
\\
\hspace{4.7cm} = -\log \Bigg( e^{-\alpha} \tifs{+O\left(\frac{1}{K}\right)} \Bigg),
\end{multline*}
\else
\begin{equation*}
E_{o}(1)\cong -\log \Bigg(1-e^{-(\ln(2)-\alpha)} + e^{-(\ln(2)-\alpha)}(1-e^{-\alpha})^2+e^{-(\tifs{\ln(2)}+\alpha)}\Bigg) \cong -\log \Bigg( 1-\tifs{2e^{-\ln(2)}}+\frac{2}{e}e^{-\alpha}\Bigg).
\end{equation*}
\fi
Hence, $E_{o}(1)=\Theta(\alpha)$, for $\alpha=\Theta(1)$ and $\alpha=o(1)$.
For a large enough constant $c$ and $\rho=1$, we are now ready to show that $T\emph{f}(\rho)\rightarrow \infty$ when
\begin{equation}\label{eq:t_k=cn}
T = \frac{cK\log N}{1-\delta}.
\end{equation}
Using the union bound on the error probability $P(E_i)$ given in \Cref{error lemma2}, the conditional error probability $P_{e|w}$ is upper bounded by
\begin{multline*}
\hspace{-0.4cm} \sum_{i=1}^{K}P(E_i) \leq K \max_{i} P(E_i) \leq\max_{i}K \\
\exp\left(-T\left(E_{o}(\rho)-\rho\frac{\log\binom{N-K}{i}M^{i}}{T}-\frac{\log\binom{K}{i}}{T}-\frac{K}{T}\right)\right).
\end{multline*}
Hence, \tifs{given \eqref{eq:exp_bound}} following the definition of $T\emph{f}(\rho)$, we need to ensure that
\begin{equation*}
TE_{o}(\rho) \geq \rho\log\binom{N-K}{i}M^{i}+\log\binom{K}{i}+ K +\log K.
\end{equation*}
Where $\rho=1$, substituting \eqref{eq:t_k=cn} and
\[
\log_2(M^i)=iT\frac{\delta-\epsilon_K}{K}= \frac{ci\log N}{1-\delta} (\delta-\epsilon_K),
\]
we have,
\ifdouble
\begin{multline*}
T\emph{f}(1)=\frac{ci\log N}{1-\delta} - i\log \binom{N-K}{i}-\frac{ci\log N}{1-\delta}(\delta-\epsilon_K)\\
-i\log \binom{K}{i} - K - \log K\rightarrow \infty,
\end{multline*}
\else
\begin{equation*}
T\emph{f}(1)=\frac{ci\log N}{1-\delta} - i\log \binom{N-K}{i}-\frac{ci\log N}{1-\delta}(\delta-\epsilon_K)-i\log \binom{K}{i} - K - \log K\rightarrow \infty,
\end{equation*}
\fi
for a large enough constant $c$ (if $K=O(\log(N))$). Since $\log\binom{n}{i} \leq i\log \frac{ne}{i}$, increasing the constant $c$ slightly, we obtain the simpler
\ifdouble
\begin{multline*}
T\emph{f}(1)\geq c\Big(1+\frac{\epsilon_K}{1-\delta} \Big)i\log N\\ - i\log \frac{(N-K)e}{i}-i\log \frac{Ke}{i} - K - \log K.
\end{multline*}
\else
\begin{equation*}
T\emph{f}(1)\geq c\Big(1+\frac{\epsilon_K}{1-\delta} \Big)i\log N - i\log \frac{(N-K)e}{i}-i\log \frac{Ke}{i} - K - \log K.
\end{equation*}
\fi
From the above it is clear that when $K=O(\log(N))$ the result is achieved with some constant $c$. For a larger $K$, one has to choose a non-constant $c$, which grows with $K$, which will result in more tests. This is doable for any $K$, yet may not be tight for $K=\omega(\log(N))$.       %
\section{Conclusions}\label{conc}
\tifs{In this paper, we proposed a novel non-adaptive SGT scheme, with parameters $N,K$ and $T$, which is asymptotically \emph{reliable} and \emph{secure}. Specifically, with a new proposed test design,} when the fraction of tests observed by Eve is $0 \leq \delta <1$, we prove that the number of tests required for both correct reconstruction at the legitimate user (with high probability) and negligible mutual information at Eve's side\footref{note2} is $\frac{1}{1-\delta}$ times the number of tests required with no secrecy constraint.
We further provide sufficiency and necessity bounds on the number of tests required in the SGT model to obtains both, \emph{reliability} and \emph{secrecy} constraints.
Moreover, we analyze in the proposed secure model, computationally efficient algorithms at the legitimate decoder, previously considered for the non-secure GT in the literature which identify the definitely non-defective items.
\appendices
\tifs{\section{Proof of \Cref{error lemma2}}\label{appendix:APPENDIX A}}
\tifs{Since all subsets of size $K$ are equality likely to be defective, without loss of generality, we consider the error probability given that the first $K$ items are defective, that is, $S_{w=1}$ is the defective set. Denote this probability by $P_{e|1}$. We have $P_{e|1} \leq \sum_{i=1}^{K}P(E_i)$, where $E_i$ is the event of a decoding error in which the decoder declares a defective set which differs from the true one in exactly $i$ items.}

\tifs{In general, we follow the derivation in \cite{atia2012boolean}. However, there is a key difference. In the code construction suggested in Section IV, for each item there are $M$ possible codewords (a ``bin" of size $M$). Only one of these codewords is selected by the mixer to decide in which pool tests the item will participate. Thus, when viewing this problem as a channel coding problem, if an item is defective, one and only one codeword out of its bin is actually transmitted (and summed with the codewords of the other $K-1$ defective items). Since the decoder does not know which codewords were selected in each bin (the randomness is known only to the mixer), there are multiple error events to consider. E.g., events where the decoder choose the wrong codeword for some items, yet identified parts of the bins correctly, and, of course, events where the codeword selected was from a wrong bin. This complicates the error analysis. Moreover, we wish to employ the correction suggested in \cite{atia2015correction}, which results in a simpler yet stricter bound.}

\tifs{Consider the event $E_i$. $E_i$ can be broken down into two disjoint events. The first is $E_i$ \emph{and} the event that the codewords selected for the correct $K-i$ items are the true transmitted ones, and the second is the event of both $E_i$ \emph{and the event that at least one of the codewords selected for the correct items is wrong}. Denote the first event as $E'_i$. It will be used in the sequel. Now, consider the case where we have $E_i$, that is, a correct decision on $K-i$ items, yet, out of these $K-i$ items, the decoder identified only $j$ codewords right, $0 \leq j \leq K-i$, and for the rest, it identified a wrong codeword in the right bin. Let $L_{ij}$ denote this event, for all possible subsets of size $j$, and $L^\xi_{ij}$, denote a specific subset, out of the ${K-i \choose j}$ possible ones. We have:
\begin{eqnarray}\label{E and E'_new}
\sum_{i=1}^{K}P(E_i) 
&=&\sum_{i=1}^{K} \sum_{j=0}^{K-i} P(L_{ij}) 
\nonumber\\
&=& \sum_{i=1}^{K} \sum_{j=0}^{K-i} \sum_{\xi} P(L^\xi_{ij}) \nonumber\\
&=&\sum_{i=1}^{K} \sum_{j=0}^{K-i} {K-i \choose j} P(L^\xi_{ij})\nonumber\\
&\leq& \sum_{i=1}^{K} \sum_{j=0}^{K-i} {K-i \choose j}P(E'_{K-j}),
\end{eqnarray}
were the last inequality is since both $E'_{K-j}$ and $L^\xi_{ij}$ refer to decoding $j$ codewords correctly, yet $E'_{K-j}$ alows for the wrong ones to be in a less restrictive set of bins compared to $L^\xi_{ij}$ (remember that $K \ll N$).}
\tifs{Continuing by interchanging $i$ and $j$, we have
\begin{eqnarray}
P_{e|1} &\leq& \sum_{j=0}^{K-1} \sum_{i=1}^{K-j} {K-i \choose j}P(E'_{K-j})\nonumber\\
&=& \sum_{j=0}^{K-1} P(E'_{K-j}) \sum_{i=1}^{K-j} {K-i \choose j} \nonumber\\
&=& \sum_{j=0}^{K-1} P(E'_{K-j}) \sum_{l=0}^{K-j-1} {j+l \choose j} \nonumber\\
&\stackrel{(a)}{=}& \sum_{j=0}^{K-1} P(E'_{K-j}) {K \choose j+1}
\nonumber\\
&\stackrel{b}{\leq}& 2^K \sum_{j=0}^{K-1} P(E'_{K-j})  \nonumber\\
&=& 2^K \sum_{i=1}^{K}P(E'_i),
\end{eqnarray}
were (a) follows from the Rising Sum of Binomial Coefficients, $\sum_{i=0}^{m} {n+i \choose n} = {n+m+1 \choose n+1}$, and (b) follows from the Binomial upper bound $\forall$  $0 \leq j+1 \leq K$, thus, ${K \choose j+1} \leq 2^K$.}

\tifs{We now bound $P(E'_i)$. Particularly, we will establish the following lemma.}
\begin{lemma}\label{error lemma2 with E'}
\tifs{The error probability $P(E'_{i})$ is bounded by}
\begin{equation*}
\tifs{P(E'_{i}) \leq 2^{-T\left(E_{o}(\rho)-\rho\frac{\log\binom{N-K}{i}M^i}{T}-\frac{\log\binom{K}{i}}{T}\right)},}
\end{equation*}
\tifs{where the error exponent $E_{o}(\rho)$ is given by}
\ifdouble
\begin{multline}
\tifs{E_{o}(\rho)= -\log \sum_{Y\in \{0,1\}}\sum_{X_{\mathcal{S}^2}\in \{0,1\}}\Bigg[\sum_{X_{\mathcal{S}^1}\in \{0,1\}}P(X_{\mathcal{S}^1})}\\
\tifs{p(Y,X_{\mathcal{S}^2}|X_{\mathcal{S}^1})^{\frac{1}{1+\rho}}\Bigg]^{1+\rho},\text{ } 0 \leq\rho \leq1.\nonumber}
\end{multline}
\else
\begin{equation*}
 \tifs{E_{o}(\rho)= -\log \sum_{Y\in \{0,1\}}\sum_{X_{\mathcal{S}^2}\in \{0,1\}}\Bigg[\sum_{X_{\mathcal{S}^1}\in \{0,1\}}P(X_{\mathcal{S}^1}) p(Y,X_{\mathcal{S}^2}|X_{\mathcal{S}^1})^{\frac{1}{1+\rho}}\Bigg]^{1+\rho},\text{ } 0 \leq\rho \leq1.}
\end{equation*}
\fi
\end{lemma}
\begin{proof}
\tifs{Denote by $\mathcal{A} = \{w \in \mathcal{W} : |S_{1^{c},w}|=i,|S_{w}|=K\}$
the set of indices corresponding to sets of $K$ items that differ from the true defective set $S_{1}$ in exactly $i$ items. Using the same notation in \cite{atia2012boolean}, $S_{1^{c},w}$ denotes the set of items which are in $S_w$ but not in $S_1$. Considering $E'_i$, we assume the decoder not only got $K-i$ items right, but also the correct codeword in each such bin. Thus, this is exactly the setup in \cite[eq. (25)]{atia2012boolean}, and for all $s>0$ we have 
\ifdouble
\begin{multline}\label{weaker_error2}
P[E'_i|w_0=1, \textbf{X}_{\mathcal{S}_1},Y^T]
\\
 \leq  \sum_{\mathcal{S}_{1,w}} \sum_{\mathcal{S}_{1^{c},w}}\sum_{\textbf{X}_{\mathcal{S}_{1^{c},w}}}
P({\scriptstyle \textbf{X}_{\mathcal{S}_{1^{c},w}}})\frac{{\scriptstyle p_{w}(Y^T,\textbf{X}_{\mathcal{S}_{1,w}}|\textbf{X}_{\mathcal{S}_{1^{c},w}})^s}} {{\scriptstyle p_{1}(Y^T,\textbf{X}_{\mathcal{S}_{1,w}}|\textbf{X}_{\mathcal{S}_{1,w^{c}}})^s}}\nonumber.
\end{multline}
\else
\begin{equation}\label{weaker_error2}
P[E'_i|w_0=1, \textbf{X}_{\mathcal{S}_1},Y^T] 
 \leq \sum_{\mathcal{S}_{1,w}} \sum_{\mathcal{S}_{1^{c},w}}\sum_{\textbf{X}_{\mathcal{S}_{1^{c},w}}}
P({\scriptstyle \textbf{X}_{\mathcal{S}_{1^{c},w}}})\frac{{\scriptstyle p_{w}(Y^T,\textbf{X}_{\mathcal{S}_{1,w}}|\textbf{X}_{\mathcal{S}_{1^{c},w}})^s}} {{\scriptstyle p_{1}(Y^T,\textbf{X}_{\mathcal{S}_{1,w}}|\textbf{X}_{\mathcal{S}_{1,w^{c}}})^s}}\nonumber.
\end{equation}
\fi
Continuing with the Gallager-type bound, for any $0\leq\rho\leq 1$,
\ifdouble
\begin{eqnarray}
&&\hspace{-0.5cm} P[E_i|w_0=1, \textbf{X}_{\mathcal{S}_1},Y^T]\nonumber\\
&\hspace{-0.5cm} \stackrel{(a)}{\leq} &\hspace{-0.4cm} \Big(\sum_{\mathcal{S}_{1,w}} \sum_{\mathcal{S}_{1^{c},w}}\sum_{\textbf{X}_{\mathcal{S}_{1^{c},w}}}
P({\scriptstyle\textbf{X}_{\mathcal{S}_{1^{c},w}}}) \frac{{\scriptstyle p_{w}(Y^T,\textbf{X}_{\mathcal{S}_{1,w}}|\textbf{X}_{\mathcal{S}_{1^{c},w}})^s}} {{\scriptstyle p_{1}(Y^T,\textbf{X}_{\mathcal{S}_{1,w}}|\textbf{X}_{\mathcal{S}_{1,w^{c}}})^s}}\Big)^{\rho}\nonumber\\
&\hspace{-0.5cm} \stackrel{(b)}{\leq} &\hspace{-0.4cm} \Big(\sum_{\mathcal{S}_{1,w}} {\scriptstyle\binom{N-K}{i}M^{i}}\sum_{\textbf{X}_{\mathcal{S}_{1^{c},w}}}
P({\scriptstyle\textbf{X}_{\mathcal{S}_{1^{c},w}}})\frac{{\scriptstyle p_{w}(Y^T,\textbf{X}_{\mathcal{S}_{1,w}}|\textbf{X}_{\mathcal{S}_{1^{c},w}})^s}} {{\scriptstyle p_{1}(Y^T,\textbf{X}_{\mathcal{S}_{1,w}}|\textbf{X}_{\mathcal{S}_{1,w^{c}}})^s}}\Big)^{\rho}\nonumber\\
&\hspace{-0.5cm} \stackrel{(c)}{\leq} &\hspace{-0.4cm} {\scriptstyle\binom{N-K}{i}^\rho M^{i\rho}}\sum_{\mathcal{S}_{1,w}}\Big(\sum_{\textbf{X}_{\mathcal{S}_{1^{c},w}}}
P({\scriptstyle\textbf{X}_{\mathcal{S}_{1^{c},w}}})\frac{{\scriptstyle p_{w}(Y^T,\textbf{X}_{\mathcal{S}_{1,w}}|\textbf{X}_{\mathcal{S}_{1^{c},w}})^s}} {{\scriptstyle p_{1}(Y^T,\textbf{X}_{\mathcal{S}_{1,w}}|\textbf{X}_{\mathcal{S}_{1,w^{c}}})^s}}\Big)^{\rho}\nonumber.
\end{eqnarray}
\else
\begin{eqnarray}
P[E_i|w_0=1, \textbf{X}_{\mathcal{S}_1},Y^T]&\stackrel{(a)}{\leq} & \Big(\sum_{\mathcal{S}_{1,w}} \sum_{\mathcal{S}_{1^{c},w}}\sum_{\textbf{X}_{\mathcal{S}_{1^{c},w}}}
P({\scriptstyle\textbf{X}_{\mathcal{S}_{1^{c},w}}}) \frac{{\scriptstyle p_{w}(Y^T,\textbf{X}_{\mathcal{S}_{1,w}}|\textbf{X}_{\mathcal{S}_{1^{c},w}})^s}} {{\scriptstyle p_{1}(Y^T,\textbf{X}_{\mathcal{S}_{1,w}}|\textbf{X}_{\mathcal{S}_{1,w^{c}}})^s}}\Big)^{\rho}\nonumber\\
& \stackrel{(b)}{\leq} & \Big(\sum_{\mathcal{S}_{1,w}} {\scriptstyle\binom{N-K}{i}M^{i}}\sum_{\textbf{X}_{\mathcal{S}_{1^{c},w}}}
P({\scriptstyle\textbf{X}_{\mathcal{S}_{1^{c},w}}})\frac{{\scriptstyle p_{w}(Y^T,\textbf{X}_{\mathcal{S}_{1,w}}|\textbf{X}_{\mathcal{S}_{1^{c},w}})^s}} {{\scriptstyle p_{1}(Y^T,\textbf{X}_{\mathcal{S}_{1,w}}|\textbf{X}_{\mathcal{S}_{1,w^{c}}})^s}}\Big)^{\rho}\nonumber\\
&\stackrel{(c)}{\leq} & {\scriptstyle\binom{N-K}{i}^\rho M^{i\rho}}\sum_{\mathcal{S}_{1,w}}\Big(\sum_{\textbf{X}_{\mathcal{S}_{1^{c},w}}}
P({\scriptstyle\textbf{X}_{\mathcal{S}_{1^{c},w}}})\frac{{\scriptstyle p_{w}(Y^T,\textbf{X}_{\mathcal{S}_{1,w}}|\textbf{X}_{\mathcal{S}_{1^{c},w}})^s}} {{\scriptstyle p_{1}(Y^T,\textbf{X}_{\mathcal{S}_{1,w}}|\textbf{X}_{\mathcal{S}_{1,w^{c}}})^s}}\Big)^{\rho}\nonumber.
\end{eqnarray}
\fi
(a) is since for any probability $p$ which is upper bounded by some $U$, $p \leq p^{\rho} \leq U^\rho$. (b) follows from the symmetry of the codebook and its binning structure. There are exactly $\binom{N-K}{i}M^{i}$ possible sets of codewords to consider for $\mathcal{S}_{1^{c},w}$ and all are equiprobable. (c) follows as the sum of positive numbers raised to the $\rho$-th power is smaller than the sum of the $\rho$-th powers.}

\tifs{We now follow the steps in \cite{atia2012boolean}, first substituting the conditional error probability in a summation over all codewords and output vectors, then noting there are $\binom{K}{K-i}$ sets $\mathcal{S}_{1,w}$, and the summation is independent on the actual set, and finally using the memoryless structure of the codebook and the channel:
\begin{eqnarray*}
 P(E'_i) &=& \sum_{\textbf{X}_{\mathcal{S}_{1}}}\sum_{Y^T}p_1(\textbf{X}_{\mathcal{S}^1},Y^T)P[E'_i|w_0=1, \textbf{X}_{\mathcal{S}_1},Y^T]\nonumber\\
&\leq& {\scriptstyle\binom{N-K}{i}^\rho M^{i\rho}}\sum_{\mathcal{S}_{1,w}}\sum_{Y^T}\sum_{\textbf{X}_{\mathcal{S}_{1}}}p_1({\scriptstyle \textbf{X}_{\mathcal{S}^1},Y^T})\nonumber\\
&& \left(\sum_{\textbf{X}_{\mathcal{S}_{1^{c},w}}}P({\scriptstyle\textbf{X}_{\mathcal{S}_{1^{c},w}}})
\frac{{\scriptstyle p_{w}(Y^T,\textbf{X}_{\mathcal{S}_{1,w}}|\textbf{X}_{\mathcal{S}_{1^{c},w}}})^s} {{\scriptstyle p_{1}(Y^T,\textbf{X}_{\mathcal{S}_{1,w}}|\textbf{X}_{\mathcal{S}_{1,w^{c}}}})^s}\right)^{\rho}\nonumber
\end{eqnarray*}
\begin{eqnarray*}
& \leq &
{\scriptstyle\binom{N-K}{i}^\rho M^{i\rho}\binom{K}{i}}\sum_{Y^T}\sum_{\textbf{X}_{\mathcal{S}_{1}}}p_1({\scriptstyle\textbf{X}_{\mathcal{S}^1},Y^T})\nonumber\\ 
&&\left(\sum_{\textbf{X}_{\mathcal{S}_{1^{c},w}}}P({\scriptstyle{\scriptstyle\textbf{X}_{\mathcal{S}_{1^{c},w}}}}) \frac{{\scriptstyle p_{w}(Y^T,\textbf{X}_{\mathcal{S}_{1,w}}|\textbf{X}_{\mathcal{S}_{1^{c},w}})^s}} {{\scriptstyle p_{1}(Y^T,\textbf{X}_{\mathcal{S}_{1,w}}|\textbf{X}_{\mathcal{S}_{1,w^{c}}})^s}}\right)^{\rho}
\nonumber\\
&=&
{\scriptstyle\binom{N-K}{i}^\rho M^{i\rho}\binom{K}{i}}\Big[\sum_{Y}\sum_{X_{\mathcal{S}_{1,w}}}
\nonumber\\
&& \hspace{-0.2cm}\big(\sum_{X_{\mathcal{S}_{1,w^{c}}}}P({\scriptstyle X_{\mathcal{S}_{1^{c},w}}}) p_1^{\frac{1}{1+\rho}}({\scriptstyle X_{\mathcal{S}_{1,w}},Y|X_{\mathcal{S}_{1,w^{c}}}})\big)^{1+\rho}\Big]^T
\nonumber\\
&=&
2^{-T\left(E_0(\rho) -\rho\left(\frac{\log{N-K \choose i}}{T}+\frac{i\log M}{T} \right) -\frac{\log{K \choose i}}{T}\right)}.
\end{eqnarray*}}
\end{proof}
             %
\begin{center}
  \huge Supplementary Materials
\end{center}
\appendices
\section*{Applications}\label{applications}
Here, we show several common applications which exemplify the applicability of the suggested code for GT to a diverse range of protocols and applications.

\subsection{Blood Testing}
The first application we suggest follows the traditional protocol where one wants to identify a small set of infected individuals out of a large population, utilizing blood tests \cite{dorfman1943detection}. However, in contrast to the traditional application, one wants to keep the identity of the infected as well as the healthy confidential such that no one, including the lab that examines the blood samples, will get any information regarding any of the examined individuals.  For this application we suggest two levels of privacy. In the first one the patients confide and are willing to disclose the results to the nurse that extracts the blood tests, yet wants to conceal the information from the lab that performs the examination. In the second scheme the patient wants to conceal the information also from the nurse that extracts the blood tests.

In the first application, a nurse collects the blood samples from the patients and prepare the pool tests, i.e., inserts a sample from each patient’s blood sample in a selected subset of $T$ test tubes. The selection on which test tubes to insert each patient’s blood samples is according to the SGT testing matrix given in Section IV. In the second stage, the nurse divides the $T$ tubes to $1/\delta$ sets, and sends each set to a distinct lab. In third stage, each lab examines each of the test tubes it received and identifies which of the tubes is positive. Each lab returns the results (the list of positive tubes) to the doctor’s clinic, which can identify the infected patients. Note that since the nurse utilizes the testing matrix as suggested in Section IV and ensures that each lab will get at most $T\delta$ test tubes, each lab is kept ignorant with respect to the status of all patient, as was proven in Section IV-B. On the other hand, the clinic which receives the results from all the labs, i.e., retains all the $T$ test tubes’ outcomes, can identify the infected patients. \tifs{In this application, the decoding procedure in the clinic may be either ML (Section IV) or DND (Section VII). Yet, $T$ needs to satisfy the conditions in Theorem 1 or Theorem 4, respectively.} An illustration of the suggested procedure is given in \Cref{figure:SBT}.

In the second procedure which we term ``Private model", the patients want to retain confidentiality from everyone, including the nurse, labs, doctor, clinic and other patients. In this model, after the nurse extracts the blood sample, the nurse \tifs{returns the} blood tube to the patient. In the clinic there are $T$ test tubes. Each patient randomly chooses from a discrete pool a list of test tubes in which he/she needs to put a sample of its blood in. Each such list is associated with a row in the secure testing matrix $\textbf{X}_{j}$, suggested in Section IV. Note that each list (row in the secure testing matrix) appears only once in the pool, such that only one patient can get this list. The patient keeps the list of its selected test tubes for his or her record. The other phases are exactly as before, the clinic sends the test tubes to $1/\delta$ different Labs, which reply which of the examined test tubes is positive. The clinic display the list of all infected tubes from all the labs. Each patient checks, if all the tubes in his or her list are positive, then he or she is infected otherwise he or she is healthy. Note that the “decoding” procedure each patient performs is according to the secure DND algorithm suggested in Section VII. \tifs{Hence, in this application, $T$ needs to satisfy the condition in Theorem 4.} Further note that the ``Private model" can also be attained by the nurse performing the mixtures, yet instead of holding the identity of the patients the nurse just provides each patient with a reference number. The list of the infected patients’ reference numbers is posted such that neither the nurse or the clinic need to know the identity of the examined patients.

\begin{figure}
  \centering
  \includegraphics[trim=0cm 0cm 0cm 0cm,clip,scale=0.85]{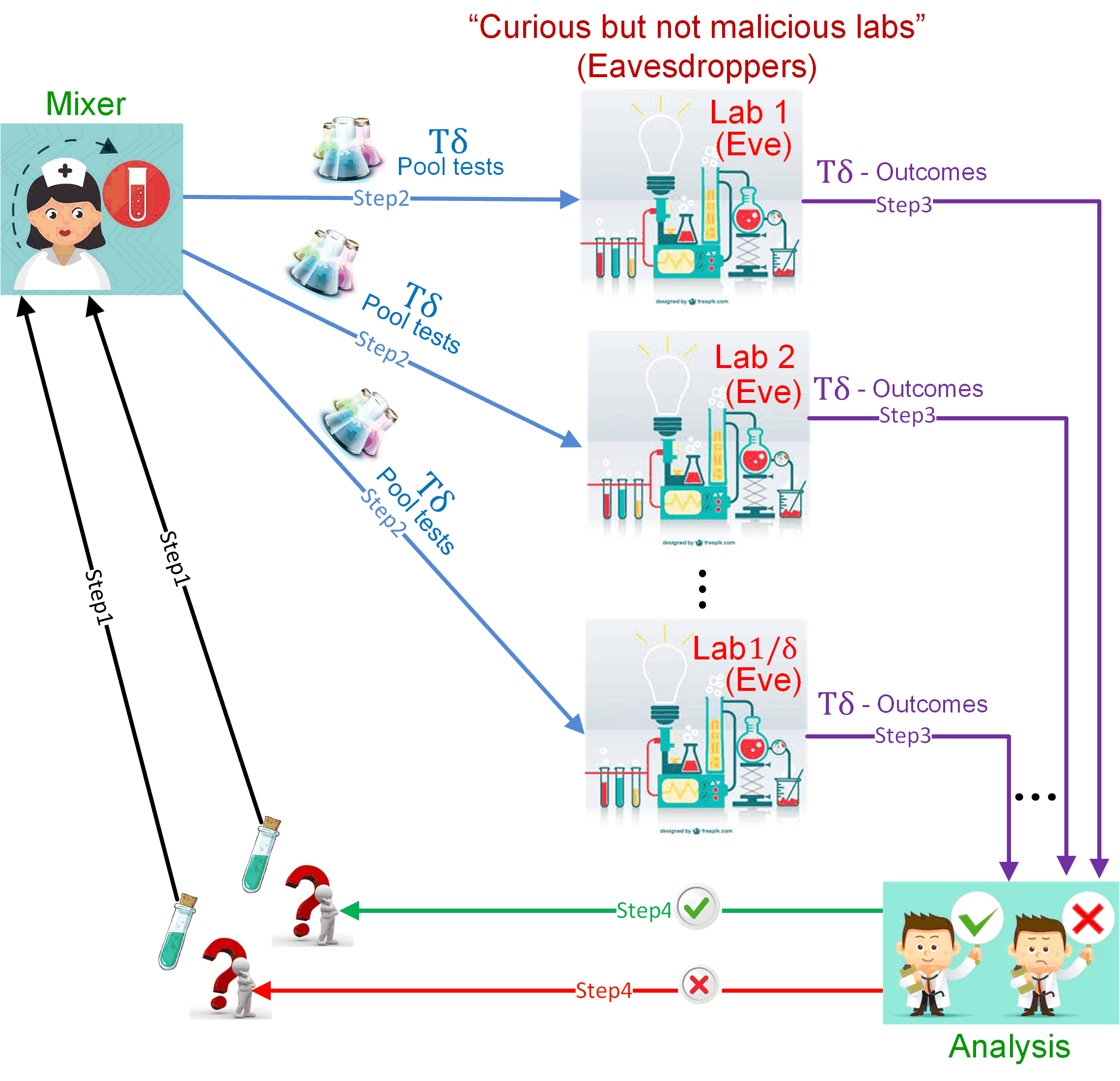}
  \caption{Secure Blood Testing.}
  \label{figure:SBT}
\end{figure}

\subsection{Wireless Sensors Network}\label{applications_wsn}
The next application relates to the prevalent Wireless Sensors Networks (WSN) and Internet of Things (IoT) Networks. Specifically, it relates to the setup in which a sink needs to collect reports from a large population of simple devices, yet only a small subset of devices is expected to have reports for transmission at a time, and the information transmitted is confidential and needs to be concealed from potential eavesdroppers.

In \cite{cohen2018secured} we suggest a highly efficient secured data gathering protocol for IoT networks using the concept of the secure GT code suggested herein. In this network we assume that there is large population of devices ($N$ devices), each with a bank of up to $C$ messages (the messages can be the same or different among different devices). The suggested protocol was designed to enable data gathering from $K$ devices simultaneously, such that the sink will be able to decode the information sent, yet an eavesdropper which observes only a noisy version of the channel output cannot gain any information on the transmitted messages or even on the identity of the nodes that sent these messages. In the suggested protocol each row from the SGT testing matrix given in Section IV represents a codeword. Accordingly, each message is assigned a different codeword (same message at different devices will be assigned different codewords). Whenever the sink broadcasts a predefined beacon which initiates a $T$ minislot interval. Each of the $K$ active devices omits energy on each of the minislots corresponding to the codeword it wishes to transmit. The sink only needs to identify the minislots in which some energy was detected in order to identify the message sent as well as the identity of the sender. Note that since each active device can send only one out of $C$ messages, and assuming that the eavesdropper can observe only a $\delta$ fraction of the minislots, a codeword of size $T = \Theta \left(\frac{K \log N C}{1-\delta}\right)$ ensures that the sink is able to decode the information while the eavesdropper cannot identify the message sent nor its sender. An illustration of the setup is given on \Cref{figure:WSN}. Note that the same application can be modified to work in the frequency domain rather than the time domain, i.e., instead of time slots each device chooses sub-channels (frequency bands) to omit the energy, such that the eavesdropper that can monitor only a subpart of the frequency band will be kept ignorant.

Another related application which was considered in many networks and algorithms is neighbor discovery \cite{mcglynn2001birthday,borbash2007asynchronous,angelosante2010neighbor,angelosante2007simple}. Neighbor discovery relates to the challenge in which in a dense network with large population of $N$ sensors, a node wishes to identify its $K$ neighbors (we assume that the number of neighbors is exactly $K$), e.g., \cite{luo2008neighbor}. Obviously a traditional Medium Access Control (MAC) protocol can apply. However, such protocols require contention and collision resolution mechanisms which can result in poor performance. Utilizing GT can dramatically improve the performance (e.g., \cite{luo2008neighbor}).  Specifically, similar to the previous example, by providing each device with a unique sequence (codeword/row in the codebook/testing-matrix) known to all the nodes in the network, each node can discover all its neighbors simultaneously (as explained in the previous example). Furthermore, if the codewords are taken from the SGT testing matrix given in Section IV, an eavesdropper which observes only a noisy version of the transmitted data will not be able to identify the Identity of the neighbors.

Moreover, in \cite{bajwa2007joint} proposed a joint source channel communication for distributed estimation in sensor networks which in a sense may be considered as a GT problem as well.

\begin{figure}
  \centering
  \includegraphics[trim=0cm 0cm 0cm 0cm,clip,scale=0.3]{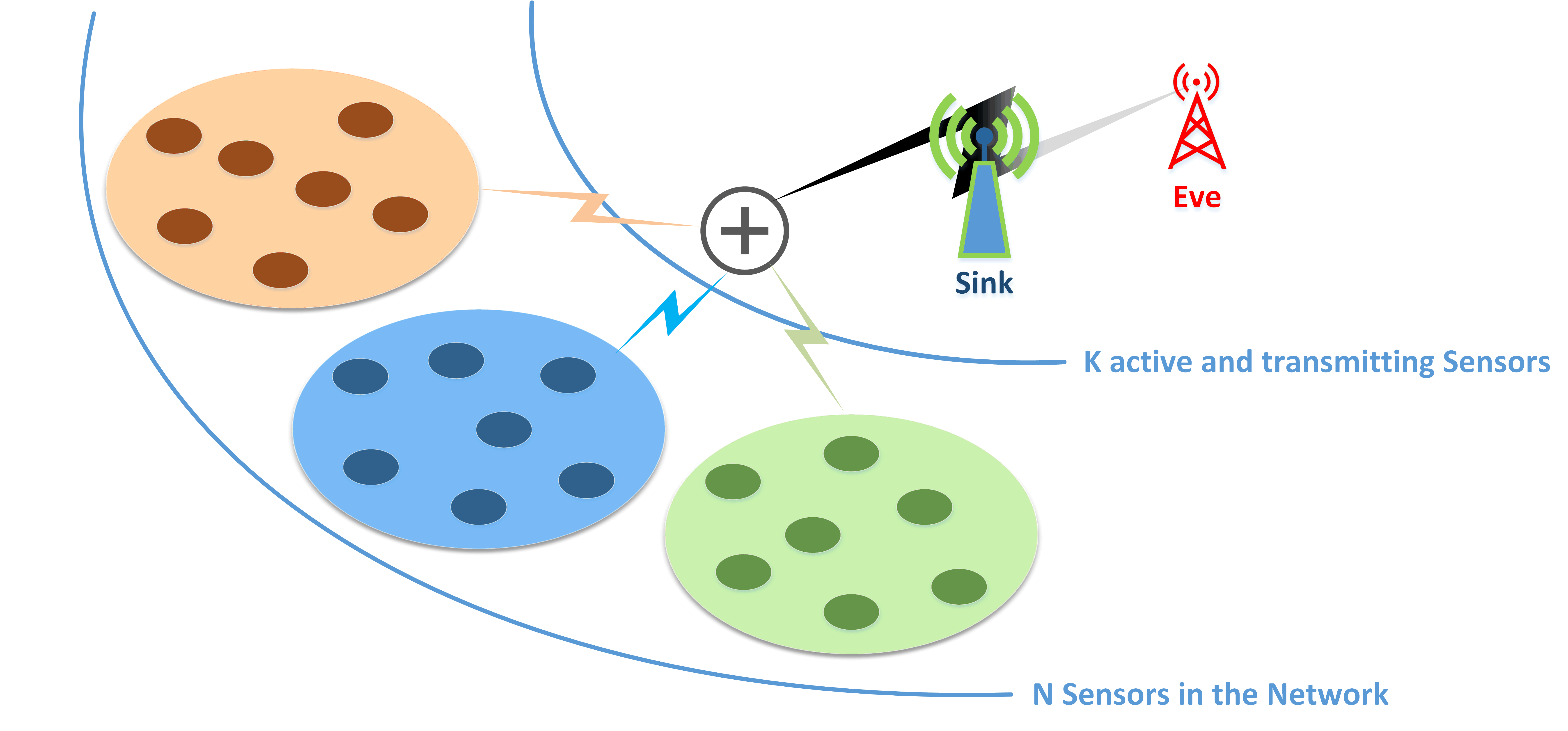}
  \caption{Secured data gathering protocol and neighbor discovery for IoT networks. We consider the case where there is a large population of $N$ sensors in the network and for neighbor discovery an upper bound of $K$ sensors in each neighborhood applies. For data aggregation protocol we assume $K$ active sensors simultaneously transmitting their messages to the sink over the whole wireless sensor network.}
  \label{figure:WSN}
\end{figure}

In this Section we introduced two applications that can utilize the suggested mechanism, however many other applications can utilize the secured GT suggested in this paper. Most applications that rely on group testing and wish to complement it with security or privacy layer can adopt the secured version suggested herein. For example, anomaly detection is an important task in processing large-scale data sets aimed at identifying unusual or rare events based on a large set of observations \cite{chandola2009anomaly}. Anomaly detection is essential in diverse domains such as intrusion detection, fraud detection, fault detection, system health monitoring, event detection in sensor networks, and detecting ecosystem disturbances. GT was utilized as one of the necessary components in various anomaly detection procedures (e.g., \cite{du1999combinatorial,tsai2004testing,muthukrishnan2005data,bai2007adaptive}). For example, in \cite{gilbert2001quicksand,gilbert2002fast,cormode2005s1,cormode2005s} GT was utilized to detect anomalies and possible attacks on wide networks such as the Internet. These procedures rely on monitoring traffic on many servers across the network, collect data and analyze at a central location. Since the amount of data needed to be collected is enormous, GT is utilized to collect an aggregated version of the data which is further merged along the route to the destination. Obviously, in many such occasions, the traversed data needs to remain confidential from potential eavesdropper and even from legitimate entities along the route. In such a scenario, the SGT version suggested herein can be utilized. Several other applications that can utilize the SGT suggested in this paper can be found in \cite{du1999combinatorial,kwang2006pooling} for a variety of fields and domains.
  %
\section*{Background}\label{BooleanCompressed}
Here, we present an extensive survey of the literature on group testing and security schemes.
\subsection{Group-testing}
Group-testing comes in various flavours, and the literature on these is vast. At the risk of leaving out much, we reference here just some of the models that have been considered in the literature, and specify our focus in this work.
\subsubsection{Performance Bounds}
GT can be \emph{non-adaptive}, where the testing matrix is designed beforehand, {\it adaptive}, where each new test can be designed while taking into account previous test \emph{results}, or a combination of the two, where testing is adaptive, yet with batches of non-adaptive tests. It is also important to distinguish between \emph{exact recovery} and a \emph{vanishing probability of error}.

To date, the best known lower bound on the number of tests required (non-adaptive, exact recovery) is $\Omega(\frac{K^2}{\log K}\log N)$ \cite{furedi1996onr}. The best known explicit constructions were given in \cite{porat2011explicit}, resulting in $O(K^2 \log N)$. However, focusing on exact recovery requires more tests, and forces a combinatorial nature on the problem. Settling for high probability reconstructions allows one to reduce the number of tests to the order of $K\log N$.\footnote{A simple information theoretic argument explains a lower bound. There are $K$ defectives out of $N$ items, hence $N \choose K$ possibilities to cover: $\log {N \choose K}$ bits of information. Since each test carries at most one bit, this is the amount of tests required. Stirling's approximation easily shows that for $K \ll N$, the leading factor of that is $K\log(N/K)$.} For example, see the channel-coding analogy given in \cite{atia2012boolean}. A similar analogy to wiretap channels will be at the basis of this work as well. In fact, probabilistic methods with an error probability guarantee appeared in \cite{sebHo1985two}, without explicitly mentioning GT, yet showed the $O(K \log N)$ bound. Additional probabilistic methods can be found in \cite{scarlett2015limits} for \emph{support recovery}, or in \cite{scarlett2016phase}, when an interesting \emph{phase transition} phenomenon was observed, yielding tight results on the threshold (in terms of the number of tests) between the error probability approaching one or vanishing. \tifs{Finally, \cite{johnson2018performance} propose a constant-column design that is superior to the Bernoulli design, i.e., where each individual joins an equal number of tests chosen uniformly at random without replacement.}
\subsubsection{A Channel Coding Interpretation}
As mentioned, the analogy to channel coding has proved useful \cite{atia2012boolean}. \cite{baldassini2013capacity} defined the notion of \emph{group testing capacity}, that is, the value of $\lim_{N \to \infty} \frac{\log{{N}\choose{K}}}{T}$ under which reliable algorithms exist, yet, over which, no reliable reconstruction is possible. A converse result for the Bernoulli, non-adaptive case was given in \cite{aldridge2017capacity}. Strong converse results were given in \cite{tan2014strong,johnson2015strong}, again, building on the channel coding analogy, as well as converses for noisy GT \cite{scarlett2016converse}. In \cite{aldridge2012adaptive}, adaptive GT was analyzed as a channel coding \emph{with feedback} problem.
\subsubsection{Efficient Algorithms}
A wide variety of techniques were used to design efficient GT decoders. Results and surveys for early non-adaptive decoding algorithms were given in \cite{chen2008survey,de2005optimal,indyk2010efficiently}. Moreover, although most of the works described above mainly targeted fundamental limits, some give efficient algorithms as well. In the context of this work, it is important to mention the recent COMP \cite{chan2014non}, DD and SCOMP \cite{aldridge2014group} algorithms, concepts from which we will use herein. \tifs{Note that in \cite{coja2019information}, it was shown that SCOMP is no better than DD and can, therefore, be discarded. Moreover, an efficient algorithm proposed in \cite{coja2019optimal} called SPIV was found that attains the information-theoretic lower bound.}

\subsection{Secure communication}
It is very important to note that {\it making GT secure is different from making communication secure, as remarked in Section I}. Now, we briefly survey the literature in secure communication, since many of the ideas/models/primitives in secure communication will have analogues in secure group-testing.
\subsubsection{Information-theoretic secrecy}
In a secure communication setting, transmitter Alice wishes to send a message $m$ to receiver Bob. To do so, she is allowed to encode $m$ into a (potentially random) function $x = f(m)$, and transmit $x$ over a medium. It is desired that the eavesdropper Eve should glean no information about $m$ from its (potentially noisy) observation $z$. This information leakage is typically measured via the mutual information between $m$ and $z$. The receiver Bob should be able to reconstruct $m$ based on its (also potentially noisy) observation of $x$ (and, potentially, a shared secret that both Bob and Alice know, but Eve is ignorant of).

There are a variety of schemes in the literature for information-theoretically secure communications.\footnote{Security in general has many connotations — for instance, in the information-theory literature it can also mean a scheme that is resilient to an {\it active adversary}, for instance a communication scheme that is resilient to jamming against a malicious jammer. In this work we focus our attention on {\it passive eavesdropping adversaries}, and aim to ensure secrecy of communications vis-a-vis such adversaries. We shall thus henceforth use the terms security and secrecy interchangeably.}
Such schemes typically make one of several assumptions (or combinations of these):

\begin{itemize}
\item {\it Shared secrets/Common randomness/Symmetric-key encryption:} The first scheme guaranteed to provide information-theoretic secrecy was by \cite{C1}, who analyzed the secrecy of one-time pad schemes and showed that they ensure perfect secrecy (no leakage of transmitted message). This scheme also provided lower bounds on the size of this shared key. The primary disadvantage of such schemes is that they require a shared key that is essentially as large as the amount of information to be conveyed, and it be continually refreshed for each new communication. These requirements typically make such schemes untenable in practice.
\item {\it Wiretap secrecy/Physical-layer secrecy:} Wyner \emph{et al.} \cite{C2,ozarow1984wire} first considered certain communication models in which the communication channel from Alice to Eve is a degraded (noisier) version of the channel from Alice to Bob, and derived the information-theoretic capacity for communication in such settings. These results have been generalized in a variety of directions. See \cite{csiszar2004secrecy,csiszar2008secrecy,C13} for (relatively) recent results. The primary disadvantage of such schemes is that they require that it be possible to instantiate communication channels from Alice to Bob that are better than the communication channel from Alice to Eve. Further, they require that the channel parameters of both channels be relatively well known to Alice and Bob, since the choice of communication rate depends on these parameters. These assumptions make such schemes also untenable in practice, since on one hand Eve may deliberately situate herself to have a relatively clearer view of Alice's transmission than Bob, and on the other hand there are often no clear physically-motivated reasons for Alice and Bob to know the channel parameters of the channel to Eve.
\item {\it Public discussion/Public feedback:} A notable result by Maurer (~\cite{maurer1993secret} and subsequent work - see \cite{C13} for details) significantly alleviated at least one of the charges level against physical-layer security systems, that they required the channel to Bob to be ``better" than the channel to Eve. Maurer demonstrated that feedback (even public feedback that is noiselessly observable by Eve) and multi-round communication schemes can allow for information-theoretically secure communication from Alice to Bob even if the channel from Alice to Bob is worse than the channel from Alice to Eve. Nonetheless, such public discussion schemes {\it still} require some level of knowledge of the channel parameters of the channel to Eve.
\end{itemize}
\subsubsection{Cryptographic security}
Due to the shortcomings highlighted above, modern communication systems usually back off from demanding information-theoretic security, and instead attempt to instantiate {\it computational security}. In these settings, instead of demanding small information leakage to arbitrary eavesdroppers, one instead assumes bounds on the computational power of the eavesdropper (for instance, that it cannot computationally efficiently invert ``one-way functions"). Under such assumptions one is then often able to provide conditional security, for instance with a public-key infrastructure ~\cite{kocher1996timing,stinson2005cryptography}.
Such schemes have their own challenges to instantiate. For one, the computational assumptions they rely on are sometimes unfounded and hence sometimes turn out to be prone to attack \cite{tagkey2004391,C13,dent2006fundamental}. For another, the computational burden of implementing cryptographic primitives with strong guarantees can be somewhat high for Alice and Bob~\cite{bernstein2008attacking}.

\subsection{Secure Group-Testing}
On the face of it, the connection between secure communication and secure group-testing is perhaps not obvious. We highlight below scenarios that make these connections explicit.
Paralleling the classification of secure communication schemes above, one can also conceive of a corresponding classification of secure GT schemes.
\subsubsection{Information-theoretic schemes}\label{Information-theoretic schemes}
\begin{itemize}
\item {\it Shared secrets/Common randomness/Symmetric-key encryption:} A possible scheme to achieve secure group testing, is to utilize a shared key between Alice and Bob. For example, consider a scenario in which Alice the nurse has a large number of blood samples that need to be tested for the presence of a disease. She sends them to a lab named Eve to be tested. To minimize the number of tests done via the lab, she pools blood samples appropriately. However, while the lab itself will perform the tests honestly, it can't be trusted to keep medical records secure, and so Alice keeps secret the identity of the people tested in each pool.\footnote{Even in this setting, it can be seen that the {\it number} of diseased individuals can still be inferred by Eve. However, this is assumed to be a publicly known/estimable parameter.}

    Given the test outcomes, doctor Bob now desires to identify the set of diseased people. To be able to reconstruct this mapping, a relatively large amount of information (the mapping between individuals' identities and pools tested) needs to be securely communicated from Alice to Bob. As in the one-time pad secure communication setting, this need for a large amount of common randomness makes such schemes unattractive in practice. Nonetheless, the question is theoretically interesting, and some interesting results have been recently reported in this direction by \cite{atallah2008private,goodrich2005indexing,freedman2004efficient,rachlin2008secrecy}.

    An alternative to keeping secret which samples comprise each pool test (encrypting the mixing matrix), is to keep the identity of the inspected patients secret (encrypting the patients’ list). Accordingly, the lab will be able to identify whether the pool-test is infected or not, but will not know the true identity of the patients in each pool-test. For example, Alice can utilize a secret sharing mechanism \cite{shamir1979share} distributing the “secret” (the list of examinees) between all the participating labs such that each lab cannot reconstruct the patients’ list, hence cannot parse the identity of the examinees in each inspected pool-test. The doctor holding the shares from all labs and the results of the pool tests can both reconstruct the patients’ list and identify the infected examinees. Obviously a scheme which relies on patients’ list encryption requires the distribution of the encrypted patient list among all the potential legitimate examiners (doctors).

\item {\it Wiretap secrecy/Physical-layer secrecy:} This is the setting of this paper. Alice does not desire to communicate a large shared key to Bob, and still wishes to maintain secrecy of the identities of the diseased people from ``honest but curious" Eve. Alice therefore does the following two things: (i) For some $\delta \in (0,1)$, she chooses a $1/\delta$ number of independent labs, and divides the T pools to be tested into $1/\delta$ {\it pool sets} of $T\delta$ pools each, and sends each set to a distinct lab. (ii) For each blood pool, she {\it publicly} reveals to {\it all} parties (Bob, Eve, and anyone else who's interested) a set ${\cal S}(t)$ of {\it possible} combinations of individuals whose blood could constitute that specific pool $t$. As to which specific combination from ${\cal S}(t)$ of individuals the pool actually comprises of, only Alice knows {\it a priori} - Alice generates this private randomness by herself, and does not leak it to anyone (perhaps by destroying all trace of it from her records).
    The twin-fold goal is now for Alice to choose pool-sets and set of ${\cal S}(t)$ for each $t \in [T]$ to ensure that as long as no more than one lab leaks information, there is sufficient randomness in the set of ${\cal S}(t)$ so that essentially no information about the diseased individuals identities leaks, but Bob (who has access to the test reports from all the $1/\delta$ labs) can still accurately estimate (using the publicly available information on ${\cal S}(t)$ for each test $t$) the disease status of each individual.
    This scenario closely parallels the scenario in Wyner's Wiretap channel.
    Specifically, this corresponds to Alice communicating a sequence of $T$ test outcomes to Bob, whereas Eve can see only a $\delta$ fraction of test outcomes. To ensure secrecy, Alice injects private randomness (corresponding to which set from ${\cal S}(t)$ corresponds to the combination of individuals that was tested in test $t$) into each test - this is the analogue of the coding schemes often used for Wyner's wiretap channels.
    Note that the setting we suggest herein lets the nurse protect the data in the blood samples, without even having the data at hand, namely, without knowing which item is defective or not, and without sharing any key with the lab or the doctor.
\begin{remark}
It is a natural theoretical question to consider corresponding generalizations of this scenario with other types of broadcast channels from Alice to Bob/Eve (not just degraded erasure channels), especially since such problems are well-understood in a wiretap security context. However, the physical motivation of such generalizations is not as clear as in the scenario outlined above. So, even though in principle the schemes we present in Section II can be generalized to other broadcast channels, to keep the presentation in this paper clean we do not pursue these generalizations here.
\end{remark}
\begin{remark}
Note that there are other mechanisms via which Alice could use her private randomness. For instance, she could {\it deliberately} contaminate some fraction of the tests she sends to each lab with blood from diseased individuals. Doing so might reduce the amount of private randomness required to instantiate secrecy. While this is an intriguing direction for future work, we do not pursue such ideas here.
\end{remark}

\item {\it Public discussion/Public feedback:} The analogue of a public discussion communication scheme in the secure group-testing context is perhaps a setting in which Alice sends blood pools to labs in {\it multiple rounds}, also known as {\it adaptive group testing} in the GT literature. Bob, on observing the set of test outcomes in round $i$, then publicly broadcasts (to Alice, Eve, and any other interested parties) some (possibly randomized) function of his observations thus far. This has several potential advantages. Firstly, adaptive group-testing schemes (e.g.\cite{aldridge2014group}) significantly outperform the best-known non-adaptive group-testing schemes (in terms of smaller number of tests required to identify diseased individuals) in regimes where $K=\omega(N^{1/3})$\footnote{\tifs{As given in \cite{johnson2018performance}, in constant column design (rather than Bernoulli), $K = \omega\left(N^{\log 2 /(1+\log 2)}\right)$}.}. One can hope for similar gains here. Secondly, as in secure communication with public discussion, one can hope that multi-round GT schemes would enable information-theoretic secrecy even in situations where Eve may potentially have access to {\it more} test outcomes than Bob. Finally, such schemes may offer storage/computational complexity advantages over non-adaptive GT schemes. Hence this is an ongoing area of research, but outside the scope of this paper.
\end{itemize}
\subsubsection{Cryptographic secrecy}
As in the context of secure communication, the use of cryptographic primitives to keep information about the items being tested secure has also been explored in sparse recovery problems (again, \cite{atallah2008private,goodrich2005indexing,freedman2004efficient,rachlin2008secrecy}). Schemes based on cryptographic primitives have similar weaknesses in the secure GT context as they do in the communication context, and we do not explore them here.
Moreover, considering the basic problem of blood testing as an application of our model, using physical-layer security (this was, in fact, the first application for non-secure GT), one must remember that, the whole point of grouping tests together is since the testing process itself is expensive. Therefore, the ``nurse" might be able to take the samples, but cannot perform the tests. The tests have to be ``sent" for testing. This leads to two key problems: (i) The materials which are sent for testing are real tubes. They are not bits of information communicated over a channel, therefore standard cryptography cannot be trivially applied to them. (ii) If the tests are distributed to a few labs (e.g., for parallel processing), this naturally leads to a problem of how to make sure each specific lab cannot decode the information from the tests it received. One cannot ``add a certain substance" to the tubes to make sure they cannot be tested or ``encrypt the tubes". Moreover, as we already mentioned in this section, the nurse needs to protect the data in the blood samples, without even having the data at hand (before the samples were actually tested). Hence, schemes based on cryptographic primitives (e.g., AES) will not trivially solve the problem at hand. However, solutions applying cryptographic-security to the labels (names) can be used when the nurse Alice and the doctor Bob are able to share some private data (keys, random permutation, etc.).   %
\ifCH
\newpage
\input{SummaryOfChanges}    %
\fi
\bibliographystyle{IEEE}
\bibliography{references,SecureNetworkCodingGossip}
\end{document}